\definecolor[named]{ACMBlue}{cmyk}{1,0.1,0,0.1}
\definecolor[named]{ACMYellow}{cmyk}{0,0.16,1,0}
\definecolor[named]{ACMOrange}{cmyk}{0,0.42,1,0.01}
\definecolor[named]{ACMRed}{cmyk}{0,0.90,0.86,0}
\definecolor[named]{ACMLightBlue}{cmyk}{0.49,0.01,0,0}
\definecolor[named]{ACMGreen}{cmyk}{0.20,0,1,0.19}
\definecolor[named]{ACMPurple}{cmyk}{0.55,1,0,0.15}
\definecolor[named]{ACMDarkBlue}{cmyk}{1,0.58,0,0.21}
\definecolor{green}{rgb}{0,0.6,0}
\definecolor{fillblack}{rgb}{0.95,0.95,0.2195}
\theoremstyle{plain}
\newtheorem{theorem}{Theorem}
\newtheorem{lemma}[theorem]{Lemma}
\newtheorem{claim}[theorem]{Claim}
\theoremstyle{definition}
\newtheorem{rem}[theorem]{Remark}
\newcounter{casenum}
\DeclareMathOperator*{\argmax}{arg\,max}
\def\final{0}  
\def\iflong{\iffalse}
\newcommand{\anote}[1]{{\color{orange}[{\tiny \textbf{Alex:} \bf #1}]\marginpar{\color{orange}*}}}
\newcommand{\knote}[1]{{\color{red}[{\tiny \textbf{Krist{\'o}f:} \bf #1}]\marginpar{\color{red}*}}}
\newcommand{\jnote}[1]{{\color{blue}[{\tiny \textbf{Julian:} \bf #1}]\marginpar{\color{blue}*}}}
\newcommand{\lnote}[1]{{\color{purple}[{\tiny \textbf{Laura:} \bf #1}]\marginpar{\color{purple}*}}}
\newcommand{\anote}[1]{}
\newcommand{\knote}[1]{}
\newcommand{\jnote}[1]{}
\newcommand{\lnote}[1]{}
\title{Envy-free dynamic pricing schemes}
\date{}
\author{Krist{\'o}f B{\'e}rczi} 
\affil{{\footnotesize MTA-ELTE Momentum Matroid Optimization Research Group and MTA-ELTE Egerv\'ary Research Group, Department of Operations Research, E\"otv\"os Lor\'and University, Budapest, Hungary. Email: \texttt{kristof.berczi@ttk.elte.hu}.}}
\author{Laura Codazzi}
\author{Julian Golak}
\affil{{\footnotesize Institute of Algorithms and Complexity, Hamburg University of Technology, Hamburg, Germany. Email: \texttt{laura.codazzi@tuhh.de, julian.golak@tuhh.de}.}}
\author{Alexander Grigoriev}
\affil{{\footnotesize Department of Data Analytics and Digitalisation, Maastricht University, Maastricht, The Netherlands. Email: \texttt{a.grigoriev@maastrichtuniversity.nl}.}}
\newcommand{\setPl}{\mathcal{A}} 
\newcommand{\nPl}{n} 
\newcommand{\setIt}{\mathcal{I}} 
\newcommand{\nIt}{m} 
\newcommand{\player}{a} 
\begin{document}
\maketitle

\begin{abstract}

A combinatorial market consists of a set of indivisible items and a set of agents, where each agent has a valuation function that specifies for each subset of items its value for the given agent. From an optimization point of view, the goal is usually to determine a pair of pricing and allocation of the items that provides an efficient distribution of the resources, i.e., maximizes the social welfare, or is as profitable as possible for the seller, i.e., maximizes the revenue. To overcome the weaknesses of mechanisms operating with static prices, a recent line of research has concentrated on dynamic pricing schemes. In this model, agents arrive in an unspecified sequential order, and the prices can be updated between two agent-arrivals. Though the dynamic setting is capable of maximizing social welfare in various scenarios, the assumption that the agents arrive one after the other eliminates the standard concept of fairness. 

In this paper, we study the existence of optimal dynamic prices under fairness constraints in unit-demand markets. We propose four possible notions of envy-freeness of different strength depending on the time period over which agents compare themselves to others: the entire time horizon, only the past, only the future, or only the present. For social welfare maximization, while the first definition leads to Walrasian equilibria, we give polynomial-time algorithms that always find envy-free optimal dynamic prices in the remaining three cases. In contrast, for revenue maximization, we show that the corresponding problems are APX-hard if the ordering of the agents is fixed. On the positive side, we give polynomial-time algorithms for the setting when the seller can choose the order in which agents arrive.  

\medskip

\noindent \textbf{Keywords:} Algorithms, Dynamic pricing scheme, Envy-free allocations, Revenue maximization, Social welfare maximization

\end{abstract}

\section{Introduction}
\label{sec:intro}

The great availability of surveys and marketing researches help sellers to predict the interest of costumers in different products, which opens up the possibility to apply user specific pricing processes. While this helps customers in purchase decisions, it makes the pricing process even more challenging for sellers. In this context, some simple economic rules are straightforward: low prices typically attract more customers but have a small revenue per item, while high prices generate a greater revenue per item but attract less customers. As sellers aim at maintaining customers' satisfaction while achieving high revenue, the need for effective pricing strategies is increasing.

We consider combinatorial markets, in which a set of indivisible items is to be distributed among a set of agents. Each agent has a valuation for each subset of items that measures how much receiving the bundle would be worth to the agent. An allocation assigns a subset of items to each agent so that every item is assigned to at most one of them. In a posted price mechanism, the seller can set the prices of the items individually, and the utility of an agent for a given bundle of items is the agent's value for the bundle minus the total price of all contained items -- the utility hence measures the agent's happiness when buying all the items of the bundle at the given prices. An allocation is considered to be envy-free if no agent would prefer to be assigned a different bundle of items.

In this paper, we study resource allocation problems in a dynamic environment from two perspectives. First, we focus on how to set prices so that a market equilibrium maximizes the overall social welfare, that is, the total sum of the agents' values. Second, we consider how to set prices that the seller's profit is maximized. To the best of our knowledge, the present work is the first one extending the concept of envy-freeness to the dynamic setting. 

\paragraph{Previous work.} 

Achieving optimal social welfare through simple mechanisms has been the center of attention for a long time due to its far-reaching applications. In particular, posted price mechanisms became a key approach to allocate resources, hence finding optimal pricing schemes is a fundamental question in combinatorial markets. A pair of pricing and allocation is called a Walrasian equilibrium if all the items are assigned to someone and each agent receives a bundle that maximizes her utility -- the definition automatically implies that the corresponding allocation maximizes social welfare. The idea of Walrasian equilibria was introduced already in the late 1800s~\cite{walras1874lausanne} and the existence of such an equilibria was verified for gross substitutes valuations by Kelso and Crawford~\cite{kelso1982job}. However, it was pointed out by Cohen-Addad et al. \cite{cohen2016invisible} and independently by Hsu et al. \cite{hsu2016prices} that the existence of Walrasian allocations strongly depends on the tie breaking process, usually carried out by a central coordinator. If the agents arrive one after the other and choose an arbitrary bundle of items that maximizes their utility, then the absence of a tie-breaking rule may result in a suboptimal allocation with respect the social welfare. To overcome these difficulties, Cohen-Addad et al. \cite{cohen2016invisible} introduced the notion of dynamic prices that proved to be a powerful tool in designing markets without a central tie-breaking coordinator. In the proposed model, agents arrive in an unspecified sequential order and the item prices can be updated before the arrival of the next agent. Their main result is a polynomial time dynamic pricing scheme that achieves optimal social welfare in unit-demand markets. This work initiated the study of dynamic pricing schemes, and the existence of optimal dynamic prices was settled for three agents with multi-demand valuations by Berger, Eden and Feldman~\cite{berger2020power}, for bi-demand valuations by B\'erczi, B\'erczi-Kov\'acs and Sz\"ogi~\cite{berczi2021dual}, for two agents with certain matroid rank valuations by B\'erczi, Kakimura and Kobayashi~\cite{berczi2021market}, and recently for four agents with multi-demand valuations by Pashkovich and Xie~\cite{pashkovich2022two}.

The market clearing condition can lead to Walrasian equilibrium with low revenue for the seller, even if prices are as high as possible. In the seminal work of Guruswami et al.~\cite{guruswami2005profit}, envy-free pricing was introduced as a relaxation of Walrasian equilibrium by dropping the requirement on the clearance of the market, but keeping the same fairness condition. In their model, the goal is to maximize the revenue when each agent has a demand and valuation for each bundle of items and each item has limited supply. The authors showed that maximizing the revenue is APX-hard already for unit-demand markets, and provided logarithmic approximations in the number of customers for the unit-demand and single-parameter cases. A similar hardness was proved independently by Aggarawal et al.~\cite{aggarwal2004algorithms}. Subsequently several versions of the problem have been shown to have poly-logarithmic inapproximability, see e.g. the works of Briest~\cite{briest2008uniform} and Chalmersook, Laekhanukit and Nanogkai~\cite{chalermsook2013independent}. Bansal et al. \cite{bansal2010dynamic} adapted the concept of envy-freeness to a pricing over time scheme. In their model, there is a single item with unlimited supply, and each agent is associated with a time interval over which she will consider buying a copy of the item, together with a maximum value the agent is willing to pay for it. The seller's goal is to set the prices at every time unit so as to maximize revenue from the sale of copies of the item over the time period.
 
\paragraph{Our results.}

The original motivation behind dynamic pricing schemes was to shift the tie-breaking process from the central coordinator to the customers, as in reality customers choose bundles of items without caring about social optimum. As it is shown by the above mentioned results, the dynamic setting is indeed capable of maximizing social welfare without the need for a central coordinator. On the other hand, this approach has an implication on the fairness of the final allocation that is usually not emphasized. The model assumes that the customers' sole objective is to pick a bundle of items maximizing their utility with respect to the prices available at their arrival, and they are not concerned with prices at earlier and/or later times. This means that envy-freeness is ensured only locally, and the final allocation together with the prices at which the items were bought do not necessarily form an envy-free solution over all time horizon. 

Our first contribution is initiating the study of dynamic pricing schemes under global fairness constraints. We extend the concept of envy-freeness to the dynamic setting in unit-demand markets, proposing four possible notions of different strength depending on whether the agents are concerned about the prices throughout entire time horizon, only in the past, only in the future, or only at their arrival. Note that the last case corresponds to the standard setting of dynamic pricing problems. We prove that, while ensuring envy-freeness for the entire time horizon basically brings the problem back to the case of static prices, the optimum social welfare can be achieved through envy-free dynamic prices in the remaining cases.

When it comes to revenue maximization, the results on the poly-logarithmic inapproximability of the optimal profit call for new approaches. Based on the success of dynamic pricing schemes in social welfare maximization, a natural idea is to combine the dynamic model with revenue maximization. The work of Bansal et al.~\cite{bansal2010dynamic} proposes a setup that resembles this idea. However, their model has a single item having unlimited supply, and agents are sold the item at the minimum price during their bid interval, which results in an allocation that is, again, envy-free only locally. 

Our second contribution is the analysis of the revenue maximization problem in a dynamic setting where fairness is defined using one of the above mentioned four possibilities. We show that, in contrast to welfare maximization, the flexibility of dynamic prices does not help in this case, and hence most of the problems are APX-hard.

Most of previous work on dynamic pricing schemes assumed that the customers arrive one after the other in an unspecified order. Apart from this standard case, we also consider two further options in all the above mentioned scenarios: when the customers arrive in a predetermined order, and when the seller has the opportunity to determine their order.

\paragraph{Practical motivation.}

Each notion of envy-freeness considered in his paper represents a natural concept of fairness that appears in everyday life. In certain markets, agents who arrive late do not perceive unfairness for prices that were posed earlier. In such situations, the seller tries to ensure that agents are not penalised by arriving too early. In other cases, discounting prices over time is a common strategy to guarantee clearance of the market. In such markets, agents are more inclined to accept that prices are low when the overall stock is low, and hence not to perceive an unfair pricing procedure. These are reasonable assumptions on customer behaviour and are resembled by our notion of envy-freeness when agents are concerned about prices only in the future and past, respectively.

In various market scenarios customers are obliged to register their arrival and interest in certain items. These cases can be further differentiated depending on whether buyers register to free time slots or they are assigned by the seller. Accordingly, the proposed assumptions on the arrival of agents is a realistic problem that sellers are facing.

\medskip
The rest of the paper is organized as follows. Basic notation and definitions are given in Section~\ref{sec:preliminaries}. Social welfare maximization is considered in Section~\ref{sec:welfare}. The main results of the paper are presented in Sections~\ref{sec:ex-post_welfare} and~\ref{sec:ex-ante_welfare}, where we give polynomial-time algorithms for social welfare maximization using ex-post and ex-ante envy-free dynamic prices, respectively. Results on revenue maximization are discussed in Section~\ref{sec:revenue}. Finally, in Section~\ref{sec:conclusions} we summarize the paper and list open problems that are subject of future research.

\section{Preliminaries}
\label{sec:preliminaries}

\paragraph{Basic notation.} 

We denote sets of \emph{real} and \emph{non-negative real numbers} by $\mathbb{R}$ and $\mathbb{R}_+$, respectively. For a positive integer $t$, we use $[t]$ to denote the set $\{1,\dots,t\}$. Let $\setIt$ be a ground set. For two subsets $X,Y\subseteq\setIt$, their \emph{symmetric difference} is $X\triangle Y\coloneqq (X\setminus Y)\cup(Y\setminus X)$. When $Y$ consists of a single element $y$, then the \emph{difference} $X\setminus \{y\}$ and \emph{union} $X\cup\{y\}$ are abbreviated by $X-y$ and $X+y$, respectively. For a function $f\colon \setIt\to\mathbb{R}$, the total sum of its values over $X$ is denoted by $f(X)\coloneqq \sum_{s\in X} f(s)$. For $X=\emptyset$, we define $f(\emptyset)=0$.

\paragraph{Graphs.} 

We denote a \emph{bipartite graph} by $G =(\setIt,\setPl;E)$, where $\setIt$ and $\setPl$ are the vertex classes and $E$ is the set of edges. By \emph{edge-weights}, we mean a function $w\colon E \to \mathbb{R}_{+}$. For a subset $X\subseteq\setIt\cup\setPl$, the \emph{subgraph of $G$ induced by $X$} is the graph obtained from $G$ by deleting all the vertices not contained in $X$, together with edges incident to them. We denote an edge of the graph going between $a\in\setPl$ and $i\in\setIt$ by $ai$. 

By orienting the edges of a bipartite graph, we get a \emph{directed graph} $D=(\setIt,\setPl;F)$, where $F$ is the set of arcs. A directed graph is called \emph{strongly connected} if every vertex is reachable from every other vertex through a directed path. A \emph{strongly connected component} of a directed graph is a subgraph that is strongly connected and is maximal with respect this property. By contracting each strongly connected component of a directed graph to a single vertex, one obtains an acyclic directed graph. Therefore, the strongly connected components have a so-called \emph{topological ordering} in which every arc going between components goes from an earlier component to a later one. 

\paragraph{Market model.}

A combinatorial market consists of a set $\setIt$ of \emph{indivisible items} and a set $\setPl$ of \emph{agents}. Throughout the paper, we denote by $\nIt\coloneqq|\setIt|$ and $\nPl\coloneqq|\setPl|$ the numbers of items and agents, respectively. An \emph{allocation} $\mathbf{X}$ assigns each agent $a$ a subset $X_a$ of items so that each item is assigned to at most one agent. 

In a \emph{unit-demand market}, each agent $\player \in \setPl$ has a valuation  $v_{\player}\colon\setIt\to \mathbb{R}_+$ over individual items and she desires only a single good, that is, we consider allocations $\mathbf{X}$ with $|X_a|\leq 1$ for $a\in\setPl$ -- in such cases we denote the item obtained by agent $a$ by $x_a$. We always assume that the agents' valuations are known in advance. Furthermore, we assume that $v_a(\emptyset) = 0$ for all agents $a \in \setPl$. Given prices $p(i)$ for each item $i\in\setIt$, the \emph{utility} of agent $a$ for item $i$ is $u_a(i)\coloneqq v_a(i)-p(i)$.  Then the \emph{social welfare} corresponding to the allocation is $\sum_{a\in\setPl} v_a(x_a)$, while the \emph{revenue} of the seller is $\sum_{a\in\setPl}p(x_a)$. 

In a \emph{static pricing scheme}, the seller sets the price $p(i)$ of each item $i\in\setIt$ in advance. Two fundamental problems in combinatorial markets are to find a pair of pricing vector $p\colon\setIt\to\mathbb{R}_+$ and allocation $\mathbf{X}$ such that the social welfare or the revenue is maximized. In contrast, in a \emph{dynamic pricing scheme} the agents arrive one after the other, and the seller can update the prices between their arrivals based on the remaining sets of items and agents. The order in which agents arrive is represented by a bijection $\sigma\colon\setPl\to[\nPl]$. The sets of agents, items and prices available before the arrival of the $t$th agent are denoted by $\setPl_t$, $\setIt_t$ and $p_t$, respectively. The utility of agent $\player$ for item $i$ at time step $t$ is then defined as $u_{\player,t}(i)\coloneqq  v_{\player}(i) - p_t(i)$. The next agent always chooses an item that maximizes her utility. After the last buyer has left, the pricing scheme terminates and results in pricing vectors $\mathbf{p}=(p_1, \ldots, p_\nPl)$ and an allocation $\mathbf{X}=(x_1,\ldots, x_\nPl)$, where $p_t$ is the price vector available at the arrival of the $t$th agent and $x_t$ is the item allocated to her. Note that $x_t$ might be an empty set if the utility of the agent is non-positive for each item in $\setIt_t$. We call a dynamic pricing scheme \emph{optimal} if the final allocation maximizes the objective, that is, the social welfare or the revenue, irrespective of the order in which the agents arrived.

In what follows, we define different variants of the model. Depending on whether ties between items are broken by the seller or the agents, we distinguish two cases:
\begin{itemize}\itemsep0em
  \item[(C1)] \emph{Seller-chooses.} If there are several items maximizing the utility of the current agent, then the seller decides which one to allocate to her.
  \item[(C2)] \emph{Agent-chooses.} If there are several items maximizing the utility of the current agent, then she decides which one to take.
\end{itemize}
In terms of finding an optimal pricing, problem (C1) is easier. Indeed, given an optimal pricing for (C2), the seller can always decide to allocate the item that was chosen by the agent.

Previous works generally assumed that agents arrive in an unspecified order. Besides this, we consider two further variants based on the control and information of the arrival process: 
\begin{itemize}\itemsep0em
  \item[(O1)] \emph{Unspecified.} The agents arrive in a fixed order that the seller has no information on.
  \item[(O2)] \emph{Predetermined.} The agents arrive in a fixed order that the seller knows in advance.
  \item[(O3)] \emph{Alterable.} The order of the agents is determined by the seller.
\end{itemize}  

Our model differs from earlier ones mainly in that we are seeking for optimal pricing schemes under fairness constraints. In the static setting, a pair of pricing $p$ and allocation $\mathbf{x}$ is \emph{envy-free} if $x_a\in\argmax\{u_a(i)\mid i\in\setIt\}$ holds for each agent $a\in\setPl$. The dynamic setting naturally suggests variants in which envy-freeness is defined over a subset of time steps. Let $T_a\subseteq [\nPl]$ be a subset of time steps for each agent $a\in\setPl$. Then price vectors $\mathbf{p}=(p_1, \ldots, p_\nPl)$ and allocation $\mathbf{X}=(x_1,\ldots, x_\nPl)$ form an envy-free allocation if $x_{a} \in \argmax\{u_{a,t}(i)\mid t\in T_a, i\in \setIt_t\}$ for each agent $\player\in\setPl$. We propose four possible notions of envy-freeness of different strength depending on the time period over which agents compare themselves to others:
\begin{itemize}\itemsep0em
  \item[(F1)] \emph{Strong envy-freeness.} Agents consider prices for the whole time horizon, that is, $T_a=\{1,\dots,\nPl\}$ for $a\in\setPl$. 
  \item[(F2)] \emph{Ex-post envy-freeness.} Agents consider prices available after and at their arrival, that is, $T_a=\{\sigma(a),\dots,\nPl\}$ for $a\in\setPl$.
  \item[(F3)] \emph{Ex-ante envy-freeness.} Agents consider prices available before and at their arrival, that is, $T_a=\{1,\dots,\sigma(a)\}$ for $a\in\setPl$.
  \item[(F4)] \emph{Weak envy-freeness.} Agents consider prices at their arrival, that is, $T_a=\{\sigma(a)\}$ for $a\in\setPl$.
\end{itemize}
Using this terminology, optimal dynamic pricing schemes discussed in \cite{cohen2016invisible,berger2020power,berczi2021dual,berczi2021market,pashkovich2022two} provide weakly envy-free solutions. It is worth mentioning that, though at first sight they might seem to be symmetric, the ex-post and ex-ante cases turns out to behave quite differently. 
  
As for the \emph{objective function}, we either consider the \emph{social welfare} $W(\mathbf{X})=\sum_{a\in\setPl} v_a(x_a)$ or the \emph{revenue} of all sold items $R(\mathbf{p},\mathbf{X}) = \sum_{a\in\setPl} p_{\sigma(a)}(x_a)$. 

These variants and the results presented in the paper are summarized in Table~\ref{table:results}. The results are split horizontally by the type of envy-freeness considered, while the columns are indexed by the type of the ordering of the agents. Algorithmic results hold irrespective of how agents break ties, while hardness results hold even if ties are broken by the seller. It is worth noting that the $O(\log(n))$-approximation algorithm of Guruswami et al.~\cite{guruswami2005profit} extends to all of variants of envy-free pricing where the objective is to maximize the revenue.

\begin{table}[t!]
\scriptsize
\centering
  \renewcommand*{\arraystretch}{1.3}
  \setlength{\tabcolsep}{3pt}
  \newcommand{\tableentrybase}[2]{\colorbox{#1}{\parbox[c][3.5em][c]{18mm}{\centering\scriptsize #2}}}
  \newcommand{\tableentrybasee}[2]{\colorbox{#1}{\parbox[c][1.5em][c]{18mm}{\centering\scriptsize #2}}}
  \newcommand{\tableentrybaseee}[2]{\colorbox{#1}{\parbox[c][3.5em][c]{19mm}{\centering\scriptsize #2}}}
  \newcommand{\tableentrybaseeee}[2]{\colorbox{#1}{\parbox[c][1.5em][c]{10mm}{\centering\scriptsize #2}}}
  \newcommand{\tableentrybaseeeee}[2]{\colorbox{#1}{\parbox[c][3.5em][c]{10mm}{\centering\scriptsize #2}}}
  
  \caption{Complexity landscape of social welfare and revenue maximization under fairness constraints in unit-demand markets. Algorithmic results (green cells) hold even in the agent-chooses setting, while hardness results (red cells) hold already for the seller-chooses case. In each row, complexities of cells with light shade are implied by cells with darker shade. \label{table:results}}
  \begin{tabular}{cc|ccc|ccc}
               & & \multicolumn{3}{c}{Welfare maximization} & \multicolumn{3}{c}{Revenue maximization}\\ 
             & Ties & Unspecified & Predetermined & Alterable & Unspecified & Predetermined & Alterable\\ \cmidrule(lr){1-2}\cmidrule(lr){3-3}\cmidrule(lr){4-4}\cmidrule(lr){5-5}\cmidrule(lr){6-6}\cmidrule(lr){7-7}\cmidrule(lr){8-8}\\[-1em]
            \rotatebox[origin=c]{90}{Strong} & \tableentrybaseeeee{ACMRed!0}{\tableentrybaseeee{ACMRed!0}{Agents} \linebreak \tableentrybaseeee{ACMRed!0}{Seller}} & \tableentrybaseee{ACMRed!0}{\tableentrybasee{ACMRed!20}{Not exists}\linebreak \tableentrybasee{ACMGreen!50}{P \linebreak \cite{walras1874lausanne,kelso1982job}}} & \tableentrybaseee{ACMRed!0}{\tableentrybasee{ACMRed!20}{Not exists}\linebreak \tableentrybasee{ACMGreen!20}{P}} & \tableentrybaseee{ACMRed!0}{\tableentrybasee{ACMRed!50}{Not exists \linebreak Rem. \ref{rem:strong_welfare}}\linebreak \tableentrybasee{ACMGreen!20}{P}} & \tableentrybaseee{ACMRed!0}{\tableentrybasee{ACMRed!20}{Not exists}\linebreak \tableentrybasee{ACMRed!20}{APX-hard}} & \tableentrybaseee{ACMRed!0}{\tableentrybasee{ACMRed!20}{Not exists}\linebreak \tableentrybasee{ACMRed!20}{APX-hard}} & \tableentrybaseee{ACMRed!0}{\tableentrybasee{ACMRed!50}{Not exists \linebreak Rem. \ref{rem:strong_welfare}}\linebreak \tableentrybasee{ACMRed!50}{APX-hard \linebreak Thm. \ref{thm:strong_revenue}}} \\ [2em]
            \rotatebox[origin=c]{90}{Ex-post} & & \tableentrybaseee{ACMGreen!0}{\tableentrybase{ACMGreen!50}{P \linebreak Thm.~\ref{thm:ex-post_welfare}}} & \tableentrybaseee{ACMGreen!0}{\tableentrybase{ACMGreen!20}{P}} & \tableentrybaseee{ACMGreen!0}{\tableentrybase{ACMGreen!20}{P}} & \tableentrybaseee{ACMRed!0}{\tableentrybase{ACMRed!20}{APX-hard}} & \tableentrybaseee{ACMRed!0}{\tableentrybase{ACMRed!50}{APX-hard  \linebreak Thm.~\ref{thm:ex_revenue_hard}}} & \tableentrybaseee{ACMRed!0}{\tableentrybase{ACMGreen!50}{P \linebreak Thm.~\ref{thm:ex-post_revenue_p}}}\\ [2em]
            \rotatebox[origin=c]{90}{Ex-ante} & & \tableentrybaseee{ACMRed!0}{\tableentrybase{ACMGreen!50}{P \linebreak Thm.~\ref{thm:ex-ante_welfare}}} & \tableentrybaseee{ACMRed!0}{\tableentrybase{ACMGreen!20}{P}} & \tableentrybaseee{ACMRed!0}{\tableentrybase{ACMGreen!20}{P}} & \tableentrybaseee{ACMRed!0}{\tableentrybase{ACMRed!20}{APX-hard}} & \tableentrybaseee{ACMRed!0}{\tableentrybase{ACMRed!50}{APX-hard  \linebreak Thm.~\ref{thm:ex_revenue_hard}}} & \tableentrybaseee{ACMRed!0}{\tableentrybase{ACMGreen!50}{P \linebreak Thm.~\ref{thm:ex-ante_revenue_p}}}\\ [2em]
            \rotatebox[origin=c]{90}{Weak}  & & \tableentrybaseee{ACMRed!0}{\tableentrybase{ACMGreen!50}{P \linebreak \cite[Thm. 3.1]{cohen2016invisible}}} & \tableentrybaseee{ACMRed!0}{\tableentrybase{ACMGreen!20}{P}} & \tableentrybaseee{ACMRed!0}{\tableentrybase{ACMGreen!20}{P}} & \tableentrybaseee{ACMRed!0}{\tableentrybase{ACMRed!0}{Open}} & \tableentrybaseee{ACMRed!0}{\tableentrybase{ACMGreen!50}{P \linebreak Thm.~\ref{thm:weak_revenue}}} & \tableentrybaseee{ACMRed!0}{\tableentrybase{ACMGreen!20}{P}} 
  \end{tabular}
\end{table}

\begin{rem}\label{rem:strong_welfare}
    In strongly envy-free pricing models, optimizing with respect to social welfare or revenue may lead to non-deterministic solutions when ties are broken by agents. In such cases, `optimality' of a pricing scheme is not well-defined. This is well-illustrated by the classic example of Cohen-Addad et al. \cite{cohen2016invisible} with three items $i_1,i_2,i_3$ and three agents $a_1,a_2,a_3$ having valuations $v_{a_j}(i_{j})=v_{a_j}(i_{j+1})=1$ $v_{a_j}(i_{j+2})=0$ for $j\in[3]$, where indices are meant in a cyclic order. Since each item has the same value for two of the agents, strong envy-freeness implies that the seller should set the prices uniformly and cannot update them. Assume that $p(i_1) = p(i_2) = p(i_3) = 1$ and that agent $j_3$ arrives first who chooses item $i_3$. If $j_1$ arrives next, she is indifferent between $i_1$ and $i_2$, but the achieved social welfare or revenue heavily depends on her decision. 
    To overcome these difficulties, one could consider different objective function, such as worst-case or average social welfare or revenue. However, this lies beyond the scope of this paper and we postpone them as subjects of future research.
\end{rem}

\paragraph{Weighted coverings.}

A unit-demand combinatorial market can be represented by a complete edge-weighted bipartite graph $G=(\setIt,\setPl;E)$, where vertex classes $\setIt$ and $\setPl$ correspond to the sets of items and agents, respectively. For any item $i\in\setIt$ and agent $a\in\setPl$, the weight of the edge $ai$ is $w(ia)\coloneqq v_{a}(i)$. Then there is a one-to-one correspondence between allocations maximizing social welfare and maximum weight matchings of $G$. Even more, the maximum weight of a matching is clearly an upper bound on the maximum revenue achievable through any pricing mechanism. These observations motivate to investigate dynamic pricing schemes through the lenses of maximum weight matchings.

Let us denote the vertex set of $G$ by $V\coloneqq \setIt\cup\setPl$. A function $\pi\colon V \to \mathbb{R}$ is a \emph{weighted covering} if $\pi(i)+\pi(a) \geq w(ia)$ holds for every edge $ia \in E$. The \emph{total value} of the covering is $\pi(V)=\sum_{v \in V} \pi(v)$. A weighted covering of minimum total value is called \emph{optimal}. A cornerstone result of graph optimization is due to Egerváry~\cite{egervary1931matrixok} who provided a min-max characterization for the maximum weight of a matching in a bipartite graph.

\begin{theorem}[Egerv{\'a}ry~\cite{egervary1931matrixok}]\label{thm:egervary}
Let $G=( \setIt,\setPl ;E)$ be a bipartite graph and $w\colon E \to \mathbb{R}$ be a weight function on the set of edges. Then the maximum weight of a matching is equal to the minimum total value of a non-negative weighted covering $\pi$ of $w$. 
\end{theorem}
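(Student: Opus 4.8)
The plan is to prove the two inequalities separately. The easy direction --- that every non-negative weighted covering has total value at least the weight of any matching --- I would obtain by chaining inequalities: writing $V(M)$ for the set of endpoints of edges of a matching $M$, and letting $\pi$ be any non-negative weighted covering, we have
$w(M)=\sum_{ia\in M}w(ia)\le\sum_{ia\in M}\bigl(\pi(i)+\pi(a)\bigr)=\sum_{v\in V(M)}\pi(v)\le\pi(V)$,
where the first inequality is the covering condition, the middle equality uses that $M$ is a matching (so the vertex sets of its edges are disjoint and their union is $V(M)$), and the last inequality uses $\pi\ge 0$. It is worth noting that non-negativity is genuinely used here, which is why the hypothesis appears in the theorem statement even though it is absent from the definition of a weighted covering; with negative $\pi$ allowed, this direction can fail.

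For the reverse inequality I would take the linear-programming route. Consider the fractional matching polytope $P=\{x\in\mathbb{R}^E_+ : \sum_{i\colon ia\in E}x_{ia}\le 1 \text{ for all } a\in\setPl,\ \sum_{a\colon ia\in E}x_{ia}\le 1 \text{ for all } i\in\setIt\}$ and the linear program $\max\{\sum_{ia\in E}w(ia)\,x_{ia} : x\in P\}$. Its LP dual is exactly $\min\{\sum_{v\in V}\pi(v) : \pi\in\mathbb{R}^V_+,\ \pi(i)+\pi(a)\ge w(ia) \text{ for all } ia\in E\}$, i.e.\ the minimum total value of a non-negative weighted covering. The primal is feasible ($x=0$) and bounded (it is $[0,1]$-box-constrained), so by strong LP duality both programs attain their optima and the optimal values coincide. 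It then remains to observe that the primal optimum is attained at a matching: the constraint matrix of $P$ is the incidence matrix of a bipartite graph, hence totally unimodular, so every vertex of the polytope $P$ is $\{0,1\}$-valued and is therefore the indicator vector of a matching. Picking an optimal vertex $x=\mathbf{1}_M$ of the primal and an optimal solution $\pi$ of the dual yields a matching $M$ and a non-negative weighted covering with $w(M)=\pi(V)$, which together with the easy direction proves the theorem.

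The only non-routine ingredient above is the integrality of $P$, i.e.\ total unimodularity of the bipartite incidence matrix; this is classical and can simply be cited, or proved in a few lines by induction. If a self-contained and constructive argument is preferred, the alternative is the Hungarian-method primal--dual scheme: maintain a non-negative covering $\pi$ together with a matching $M$ inside the equality subgraph $\{ia\in E : \pi(i)+\pi(a)=w(ia)\}$; whenever an alternating path from an unmatched, $\pi$-positive vertex lets one enlarge $M$, do so; when no such path exists, decrease $\pi$ by a suitable $\delta>0$ on the agents reachable by alternating paths from such vertices and increase it by $\delta$ on the reachable items, with $\delta$ chosen maximal subject to preserving both the covering inequalities and non-negativity. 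Here the main obstacle is the termination and progress argument --- bounding the number of iterations --- together with the bookkeeping needed to certify that at termination the complementary-slackness conditions hold, namely that every edge of $M$ is tight and every vertex not covered by $M$ has $\pi$-value $0$; from these, $w(M)=\sum_{ia\in M}(\pi(i)+\pi(a))=\sum_{v\in V(M)}\pi(v)=\pi(V)$ follows exactly as in the easy direction.
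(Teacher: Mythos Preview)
Your argument is correct: the weak-duality inequality is exactly as you write it, and the reverse inequality via LP duality together with the total unimodularity of the bipartite incidence matrix is the standard modern proof. The Hungarian-method alternative you sketch is also a valid route and indeed yields the complementary-slackness certificates you mention.

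There is nothing to compare against, however: the paper does not prove this theorem. It quotes it as a classical result of Egerv\'ary with a citation and then uses it as a black box (chiefly through Lemma~\ref{lemma:dual2}, which refines the optimal covering). So your proposal supplies a proof where the paper provides none; your LP-duality approach is perfectly acceptable and arguably the cleanest way to present it in a modern write-up.
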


Given a weighted covering $\pi$, item $i\in\setIt$ and agent $a\in\setPl$, the edge $ai$ is called \emph{tight with respect to $\pi$} if $\pi(i)+\pi(a)=w(ia)$. The \emph{subgraph of tight edges} is then denoted by $G_\pi = ( \setIt, \setPl ;E_\pi)$. We call an edge $ia\in E$ \emph{legal} if there exists a maximum weight matching containing it, and in such a case we say that $i$ is \emph{legal} for $a$. It is known that legal edges are always tight with respect to any optimal weighted covering, while the converse does not always hold, that is, a tight edge is not necessarily legal. However,~\cite[Lemma 5]{berczi2021dual} showed that a careful choice of $\pi$ ensures the sets of tight and legal edges to coincide.

\begin{lemma}[Bérczi, Bérczi-Kovács and Szögi~\cite{berczi2021dual}]\label{lemma:dual2}
The optimal $\pi$ attaining the minimum in Theorem~\ref{thm:egervary} can be chosen such that 
\begin{itemize}
    \item[(a)] an edge $ai$ is tight with respect to $\pi$ if and only if it is legal, and 
    \item[(b)] $\pi(v)=0$ for some $v\in V$ if and only if there exists a maximum weight matching $M$ with $d_M = 0$.
\end{itemize}
Furthermore, such a $\pi$ can be determined in polynomial time.
\end{lemma}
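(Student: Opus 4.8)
\noindent\emph{Proof idea.}
The plan is to argue through the linear program behind Egerv\'ary's theorem. Writing the maximum weight of a matching as the optimum of $\max\{\sum_{e\in E} w(e)x_e : \sum_{e\ni v} x_e\le 1\ \forall v\in V,\ x\ge 0\}$, a weighted covering is precisely a feasible solution of the dual $\min\{\sum_{v\in V}\pi(v) : \pi(i)+\pi(a)\ge w(ia)\ \forall ia\in E,\ \pi\ge 0\}$, and since $G$ is bipartite both polyhedra are integral. Let $F$ denote the (bounded) face of optimal coverings, and recall the elementary polyhedral fact that at a point in the relative interior of $F$ exactly those inequalities are tight that are tight on all of $F$.

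For part~(a) I would take $\pi$ in the relative interior of $F$ and show that its tight edges are exactly the legal ones. The direction ``legal $\Rightarrow$ tight'' is complementary slackness: a maximum-weight matching $M$ and any optimal covering form a complementary primal--dual pair, so $ia\in M$ forces $ia$ tight at every optimal covering. For ``tight $\Rightarrow$ legal'', note first that an edge tight at the relative-interior point is tight throughout $F$; then I would invoke strict complementarity (the Goldman--Tucker theorem): there is an optimal primal $x^\star$ strictly complementary to some optimal covering, which necessarily lies in the relative interior of $F$, and for which an edge is tight if and only if $x^\star_e>0$. Since the matching polytope is integral, $x^\star$ is a convex combination of maximum-weight matchings, so $x^\star_e>0$ holds exactly for edges lying in at least one of them, i.e.\ for the legal edges. (Alternatively, ``tight $\Rightarrow$ legal'' follows by applying strong duality to the matching LP with the extra requirement that $ia$ be used, which would exhibit an optimal covering making $ia$ slack if $ia$ were not legal.)

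For part~(b) the point is that the covering obtained for~(a) can be further adjusted without disturbing~(a). The crucial observation is that transferring a constant $\delta$ between the two colour classes -- decreasing $\pi$ on $\setPl$ and increasing it on $\setIt$, and more generally doing this separately inside each piece of the decomposition of $G_\pi$ obtained by orienting it along a maximum-weight matching and taking strongly connected components in the spirit of the directed-graph machinery recalled above -- leaves every sum $\pi(i)+\pi(a)$ unchanged, hence keeps the covering optimal and keeps its tight-edge set, so~(a) is preserved. Using this freedom I would drive a coordinate of $\pi$ down to $0$ whenever non-negativity permits, and then reconcile the two sides of the biconditional: on the one hand, by complementary slackness a maximum-weight matching witnessing $d_M=0$ forces the corresponding coordinate to vanish in every optimal covering; on the other hand, if no maximum-weight matching has $d_M=0$, a K\"onig/Hall-type deficiency argument on $G_\pi$ shows that every optimal covering is strictly positive, so the covering produced so far already meets~(b). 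For the polynomial-time claim, an initial optimal covering is produced by the Hungarian method, and the passage to the relative interior together with the potential transfers of part~(b) is realised by a cleanup phase that repeatedly detects a tight non-legal edge or a coordinate that can be zeroed and updates $\pi$; each update strictly simplifies $G_\pi$ (or increases the number of zero coordinates), so only polynomially many rounds occur, each solving one matching/shortest-path computation.

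The step I expect to be the genuine obstacle is making parts~(a) and~(b) coexist: property~(a) pushes $\pi$ towards the ``centre'' of $F$ so that no non-legal edge accidentally becomes tight, whereas~(b) pushes $\pi$ onto the boundary of the non-negative orthant. One has to organise the potential transfers so that zeroing a coordinate never creates a new tight edge, and to pin down exactly which coordinates are zeroable in terms of the quantity $d_M$; this interplay between the tight-edge structure of $G_\pi$ and its component structure is where the careful work lies.
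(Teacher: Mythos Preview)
The paper does not prove this lemma; it is quoted verbatim from \cite{berczi2021dual} (their Lemma~5) and used as a black box. So there is no ``paper's own proof'' to compare against.

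Your approach for part~(a) via the relative interior of the optimal dual face and strict complementarity is correct and standard. What you seem to have missed is that the \emph{same} argument already yields part~(b): a strictly complementary pair $(x^\star,\pi^\star)$ satisfies, for every vertex $v$, exactly one of $\pi^\star(v)>0$ and $\sum_{e\ni v}x^\star_e<1$. Since $x^\star$ is a convex combination of maximum-weight matchings, $\sum_{e\ni v}x^\star_e<1$ holds precisely when some maximum-weight matching leaves $v$ uncovered. Hence any $\pi$ in the relative interior of the optimal face simultaneously fulfils (a) and~(b); there is no tension between the two, and no second adjustment phase is needed.

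By contrast, the mechanism you propose for~(b) is flawed as stated: adding $\delta$ to every $\pi(i)$ and subtracting $\delta$ from every $\pi(a)$ does preserve each sum $\pi(i)+\pi(a)$, but it changes the objective $\sum_{v}\pi(v)$ by $\delta(|\setIt|-|\setPl|)$, so in general it does \emph{not} keep the covering optimal. Your parenthetical remedy---shifting only inside pieces of the directed decomposition where the two sides have equal size---can be made to work, but it is unnecessary once you observe that strict complementarity handles~(b) directly. For the polynomial-time claim, one can either compute a strictly complementary solution (e.g.\ via an interior-point method or Balinski--Tucker style tableau), or follow the combinatorial perturbation argument in \cite{berczi2021dual}.
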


Finally, we will use the following technical lemma, see~\cite[Lemma 1]{berczi2021dual}.

\begin{lemma}\label{lem:all}
Given a bipartite graph $G=(\setIt,\setPl;E)$ corresponding to unit-demand combinatorial market, we may assume that all items are covered by every maximum weight matching of $G$. 
\end{lemma}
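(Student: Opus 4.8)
The plan is to make the statement true by an explicit transformation of the market that leaves untouched everything the paper cares about: the optimal social welfare, the optimal revenue, and the existence (and polynomial-time computability) of optimal dynamic pricing schemes for each of the ordering models (O1)--(O3) and each envy-freeness notion (F1)--(F4). Concretely, for every item $i\in\setIt$ I would add a new \emph{dummy agent} $d_i$ with valuation $v_{d_i}(i)=\epsilon$ and $v_{d_i}(j)=0$ for all $j\ne i$, where $\epsilon>0$ is chosen so small that $\nIt\cdot\epsilon$ is smaller than the least positive difference between the weights of two matchings of $G$ (if all matchings have the same weight the claim is immediate and any $\epsilon$ works; alternatively one may treat $\epsilon$ as a symbolic infinitesimal). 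The resulting instance $G'$ is again a unit-demand combinatorial market, and the whole point is that, as I argue next, $G'$ already has the desired property while being interchangeable with $G$.

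First I would establish the combinatorial claim for $G'$ by a one-edge exchange argument: in any maximum weight matching $M'$ of $G'$, every item $i$ is covered, since otherwise $d_i$ is either free or matched along a zero-weight edge, and in both cases re-matching $d_i$ to $i$ increases the weight by $\epsilon>0$, contradicting maximality. Next, the choice of $\epsilon$ forces the restriction $M'\cap(\setIt\times\setPl)$ of any maximum weight matching of $G'$ to be a maximum weight matching of $G$: its weight is at least $w(M')-\nIt\epsilon\ge\mu(G)-\nIt\epsilon$ (using $\mu(G')\ge\mu(G)$, seen by extending a maximum matching of $G$ with dummy edges on the free items), hence it exceeds $\mu(G)$ minus the smallest positive gap and therefore equals $\mu(G)$. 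Consequently $\mu(G)\le\mu(G')\le\mu(G)+\nIt\epsilon$, and similarly the optimal revenues of $G$ and $G'$ differ by at most $\nIt\epsilon$, so the optima for $G$ can be recovered from those for $G'$.

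Then I would argue that the reduction is faithful for dynamic pricing. Since $G'$ satisfies the conclusion of the lemma, it suffices to transfer an optimal (envy-free) dynamic pricing scheme for $G'$ back to $G$: whenever, in a run on $G$, the not-yet-arrived agents form a set $S\subseteq\setPl$ and the remaining items form $\setIt'$, the seller posts the price vector that the $G'$-scheme would post for the state in which the remaining agents are $S$ together with \emph{all} dummy agents and the remaining items are $\setIt'$ --- i.e.\ the seller simulates the run of $G'$ in which every dummy arrives after every original agent. Because $v_{d_i}$ never rewards an item other than $i$ and prices are non-negative, no dummy can ever perturb the allocation among the items before all original agents have arrived, so each original agent in $G$ faces exactly the prices she would face in that simulated $G'$-run and makes the same choice. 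Hence the final allocation on $G$ is the restriction of the final $G'$-allocation, which by optimality of the $G'$-scheme is a maximum weight matching of $G'$, and therefore (by the previous paragraph) a maximum weight allocation for $G$; the same holds for revenue up to the negligible dummy contribution. Envy-freeness transfers directly: for the weak, ex-ante, ex-post, and strong notions, the set of time steps an original agent is concerned with in $G$ is contained in (or is the relevant prefix/suffix of) the corresponding set in the simulated $G'$-run, so her envy-freeness guarantee in $G'$ implies it in $G$.

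I expect the main obstacle to be precisely this last transfer: one must check uniformly across (O1)--(O3) and (F1)--(F4) that (a) ``all dummies last'' is a legitimate arrival order that an optimal $G'$-scheme is required to handle, (b) inserting the dummies does not change the prices seen by the original agents nor the allocation of the original items during the original-agent phase, and (c) the mismatch between the longer time horizon of $G'$ and the shorter one of $G$ never hurts an original agent --- the delicate case being strong envy-freeness, where an agent compares against the whole (longer) horizon in $G'$. By contrast, the purely matching-theoretic parts are routine exchange and gap arguments, so essentially all the care goes into making the scheme-transfer step watertight.
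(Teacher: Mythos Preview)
Your dummy-agent construction is a valid way to establish the lemma, but it takes a more roundabout path than the argument the paper invokes (via \cite[Lemma~1]{berczi2021dual}). The standard reduction there goes in the opposite direction: rather than \emph{adding} agents so that every item becomes covered, one \emph{removes} items that fail to be covered. If some item $i$ is missed by a maximum-weight matching $M$, then $\mu(G-i)=\mu(G)$; iterating yields a sub-instance $G_0$ on the same agent set $\setPl$ in which every remaining item lies in every maximum-weight matching. Any dynamic scheme on $G_0$ then extends to $G$ by pricing each deleted item above all valuations at every time step: no agent ever prefers such an item, so the allocation and the prices on the surviving items are unaffected, and all four envy-freeness notions carry over trivially because the deleted items remain permanently available at a price nobody wants.

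The practical advantage of deletion over your dummy-agent route is that it keeps $\setPl$ fixed, so arrival orders transfer verbatim across all three ordering models. This is precisely where your own proposal becomes delicate: in the alterable-order setting (O3) used in Theorems~\ref{thm:ex-post_revenue_p} and~\ref{thm:ex-ante_revenue_p}, the $G'$-scheme \emph{chooses} its own ordering, and nothing forces it to place the dummies last --- yet your transfer argument presupposes a ``dummies last'' run. If a dummy $d_i$ arrives and buys $i$ before some original agent, the remaining-item sets in the two runs diverge and the price vector at the corresponding $G'$-step no longer even assigns a value to $i$. The gap can be patched (e.g.\ by arguing that dummy arrivals can be postponed without loss and pricing dummy-claimed items at $+\infty$ thereafter), but this is additional work you have not carried out, whereas the deletion argument sidesteps the issue entirely. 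Your matching-theoretic claims (every item of $G'$ is covered; the restriction of any maximum-weight matching of $G'$ to $\setIt\times\setPl$ is maximum in $G$) are correct, and for the unspecified- and predetermined-order cases your transfer goes through cleanly.
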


\section{Maximizing the social welfare}
\label{sec:welfare}

Since a Walrasian equilibrium maximizes social welfare and ensures envy-freeness at the same time, the existence of optimal dynamic pricing schemes under fairness constraints is settled when ties are broken by the seller. The seminal paper of Cohen-Addad et al.~\cite{cohen2016invisible} initiated the study of the agent-chooses case, and provided an algorithm for determining a weakly envy-free solution in unit-demand markets. 

A different proof of the same result was later given by Bérczi, Bérczi-Kovács and Szögi~\cite{berczi2021dual}. Their algorithm starts with a minimum weighted covering provided by Lemma~\ref{lemma:dual2} in the edge-weighted bipartite graph representing the market, and sets the initial prices according to the covering values. As a result, the first agent $a$ chooses an item $i$ such that $ai$ is tight, hence $i$ is legal for $a$. Based on this, it may seem that, keeping the same prices throughout, the resulting allocation will eventually be a maximum weight matching. However, after item $i$ is taken, an edge that was legal before might become non-legal. To overcome this, the weighted covering needs to be updated in the remaining graph at each time step, which causes the price of an item fluctuating over time. For that reason, the algorithm does not extend to the ex-post and ex-ante envy-free cases. 

To prevent the fluctuation of prices, we do not recompute the weighted covering at each time step from scratch. Instead, we fix a single weighted covering at the very beginning, and then we always slightly modify it to control the agents' choices in such a way that 
\begin{enumerate}[label=(\Alph*)]\itemsep0em 
\item\label{it:a} no matter which agent arrives next, if she is covered by every maximum weight matching in the current graph then she picks an item that is legal for her, otherwise she either picks an item that is legal for her or does not take an item at all, and
\item\label{it:b} the price-changes from time step $t-1$ to $t$ are limited to non-increases in the ex-post and to non-decreases in the ex-ante case. 
\end{enumerate}
The first property implies that the final allocation corresponds to a maximum weight matching of $G$, hence it maximizes social welfare. The second property ensures that the resulting allocation meets the requirements of ex-post or ex-ante envy-freeness.

\subsection{Preparations}
\label{sec:preparartion}

Consider the edge-weighted bipartite graph $G=(\setIt,\setPl;E)$ representing the market. By Lemma~\ref{lem:all}, we may assume that all items are covered by every maximum weight matching of $G$. Take a weighted covering $\pi$ provided by Lemma~\ref{lemma:dual2}. Throughout this section, tightness of an edge is always meant with respect to $\pi$. Recall that $G_\pi$ denotes the subgraph of tight edges. 

At time step $t$, the sets of remaining agents and items are denoted by $\setPl_t$ and $\setIt_t$, respectively. We denote the subgraph of $G_\pi$ induced by vertices $\setIt_t\cup\setPl_t$ by $G_t=(V_t,E_t)$. For a maximum weight matching $M_t$ of $G_t$, let $x^t_a$ denote the item to which $a$ is matched in $M_t$ if such an item exists, otherwise define $x^t_a$ to be the empty set. Note that the edges $ax^t_a$ are obviously legal in $G_t$.

Given such a matching $M_t$, we construct a directed graph $D_t$ as follows. We add another copy of every edge in $M_t$ to $G_t$ that we refer to as dummy edges. Then we orient the original copies in $M_t$ from $\setIt_t$ to $\setPl_t$, and orient all the remaining edges -- including the dummy ones -- from $\setPl_t$ to $\setIt_t$. We denote the strongly connected components of the resulting directed graph by $C^t_1,\dots,C^t_{q_t}$ indexed according to a topological ordering, see Figure~\ref{fig:ex} for an example.

\begin{figure}[t!]
\centering
\begin{subfigure}[t]{0.32\textwidth}
  \centering
  \includegraphics[width=.8\linewidth]{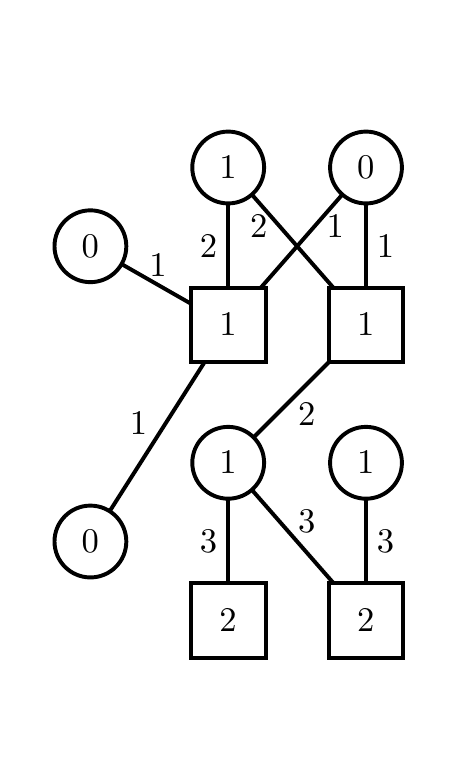}
  \caption{$G_t$ with edge-weights and weighted covering values.}
  \label{fig:ex-post1}
\end{subfigure}\hfill
\begin{subfigure}[t]{0.32\textwidth}
  \centering
  \includegraphics[width=.8\linewidth]{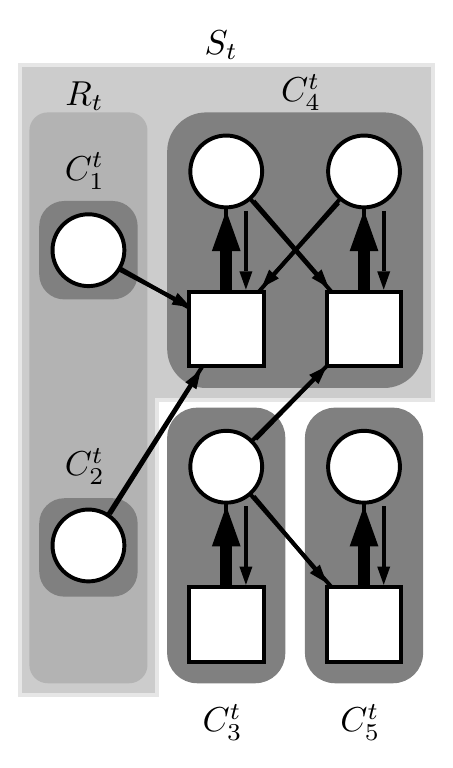}
  \caption{Definition of the set $S_t$ in the proof of Theorem~\ref{thm:ex-post_revenue_p}.}
  \label{fig:ex-post2}
\end{subfigure}\hfill
\begin{subfigure}[t]{0.32\textwidth}
  \centering
  \includegraphics[width=.8\linewidth]{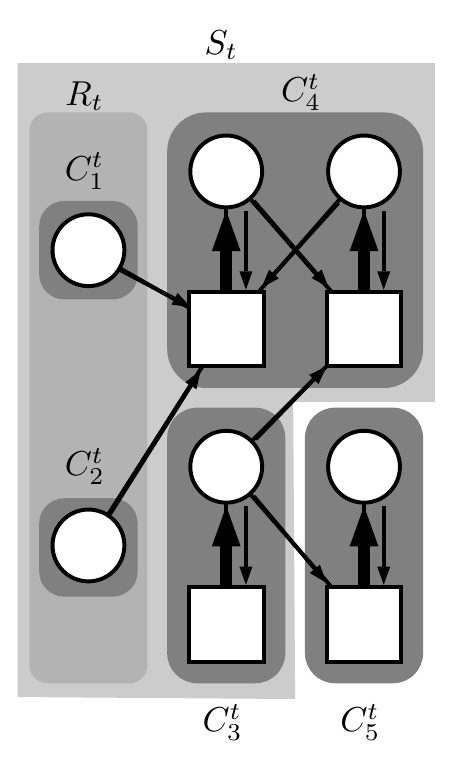}
  \caption{Definition of the set $S_t$ in the proof of Theorem~\ref{thm:ex-ante_revenue_p}.}
  \label{fig:ex-post3}
\end{subfigure}
\caption{Illustration of the constructions for the ex-post and ex-ante cases. Circles and squares correspond to agents and items, respectively. Thick edges denote a maximum weight matching $M_t$, while $C^t_1,\dots,C^t_5$ are the strongly connected components of the directed graph $D_t$.}
\label{fig:ex}
\end{figure}

The following technical claims will be used later.

\begin{claim}\label{cl:legal}
Let $a\in\setPl_t$ and $i\in\setIt_t$ be such that they are in the same strongly connected component of $D_t$ and $ai$ is tight. Then $i$ is legal for $a$ in $G_t$.
\end{claim}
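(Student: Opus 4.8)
The plan is to dispose of the trivial case $ai\in M_t$ (where $M_t$ itself witnesses legality) and, in the remaining case $ai\notin M_t$, to exhibit an $M_t$-alternating cycle through $ai$ consisting entirely of tight edges; the standard alternating exchange then produces another maximum weight matching of $G_t$ containing $ai$, which is exactly what legality of $i$ for $a$ in $G_t$ means.

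Concretely, assume $ai\notin M_t$. Since $ai$ is tight it lies in $E_t$, so by the orientation rule $D_t$ contains the arc $a\to i$. As $a$ and $i$ share a strongly connected component of $D_t$, I would take a simple directed path $P$ from $i$ to $a$ and close it up with the arc $a\to i$. Because $P$ starts at $i$ and ends at $a$, it is internally disjoint from $\{a,i\}$ and does not traverse $a\to i$ (that arc leaves $a$, which is the terminal vertex of $P$), so $Z:=\{a\to i\}\cup P$ is a simple directed cycle of $D_t$. It is not a $2$-cycle, since the reverse arc $i\to a$ is an item-to-agent arc and is present only if $ai\in M_t$; and since every directed walk in $D_t$ alternates between $\setPl_t$ and $\setIt_t$, $Z$ has even length at least $4$.

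The crux is to argue that $Z$ contains no dummy edge. I would do this by contradiction: if the dummy arc $a'\to i'$ of some $a'i'\in M_t$ lay on $Z$, then the arc of $Z$ entering the agent $a'$ must be an arc oriented from $\setIt_t$ to $\setPl_t$, i.e.\ an original copy of a matching edge at $a'$, and the only such edge is $i'\to a'$. Then $Z$ would traverse $i'\to a'\to i'$ and revisit $i'$, contradicting that $Z$ is a simple cycle of length at least $4$. Hence, read undirectedly, the item-to-agent arcs of $Z$ are exactly $M_t\cap E(Z)$ and the agent-to-item arcs of $Z$ — including $ai$ — are exactly $E(Z)\setminus M_t$; so $E(Z)$ is a genuine $M_t$-alternating cycle, and every vertex of $Z$ meets exactly one edge of $M_t\cap E(Z)$ and exactly one edge of $E(Z)\setminus M_t$.

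It then remains to check that $M_t':=M_t\triangle E(Z)$ has the same weight as $M_t$. Every edge $e$ of $Z$ is tight, so $w(e)=\pi(x)+\pi(y)$ for its endpoints $x,y$; summing this over $M_t\cap E(Z)$ and, separately, over $E(Z)\setminus M_t$, each sum equals $\sum_{v\in V(Z)}\pi(v)$ by the alternating property, whence $w\bigl(M_t\cap E(Z)\bigr)=w\bigl(E(Z)\setminus M_t\bigr)$ and $w(M_t')=w(M_t)$. Since $M_t\subseteq E_t$ and $E(Z)\subseteq E_t$, $M_t'$ is a matching of $G_t$; being of maximum weight and containing $ai$, it certifies that $i$ is legal for $a$ in $G_t$. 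I expect the one genuinely delicate point to be the ``no dummy edge'' step, with its book-keeping of original versus dummy versus non-matching orientations, while extracting a simple cycle and checking weight preservation are routine.
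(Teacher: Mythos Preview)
Your proof is correct and follows essentially the same approach as the paper: both dispose of the case $ai\in M_t$ trivially and, in the remaining case, use strong connectivity to obtain a directed cycle through the arc $a\to i$ and observe that the symmetric difference with $M_t$ yields a maximum weight matching of $G_t$ containing $ai$. Your treatment is in fact more explicit about ruling out dummy arcs from the cycle---a point the paper's proof leaves implicit when it writes ``$M_t\triangle C$ is also a maximum weight matching''.
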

\begin{proof}
If the edge is oriented from $i$ to $a$, then $ia\in M_t$ and the statement clearly holds. Otherwise, the edge is oriented from $a$ to $i$. Since every arc of a strongly connected digraph is contained in a directed cycle, it suffices to show that any directed cycle $C$ consists only of legal edges. This follows from the fact that all the edges of $G_t$ are tight, hence $M_t\triangle C$ is also a maximum weight matching of $G_t$, implying that it consists of legal edges.
\end{proof}

In both the ex-post and ex-ante cases, our proof builds on the following idea. 
It is not difficult to check that if one sets the price of each item to its $\pi$ value, then agents strictly prefer items to which they are connected through a tight edge over items for which the corresponding edge is not tight, see Lemma~\ref{cl:prefer}. However, as time passes, a tight edge is no longer necessarily legal. In order to prevent the next agent to choose such an edge, we will slightly change the prices in a case-specific way. To do this, let $\delta>0$ be a constant for which \[\delta<\frac{1}{2}\min\bigl\{\min\{\pi(a)+\pi(i)-v_a(i)\mid ia\in E\ \text{is not tight}\},\ \min\{\pi(i)\mid i\in \setIt\}\bigr\}.\] Note that such a $\delta$ exists by Lemmas~\ref{lemma:dual2}(b) and~\ref{lem:all}. We also choose a small constant $0<\varepsilon< \delta/(\nPl2^\nPl)$ that will be used later on. 
The next claim shows that tight edges lead to greater utility than non tight ones.

\begin{claim}\label{cl:prefer}
Let $a\in\setPl_t$ and $i,i'\in\setIt_t$ be such that $ai$ is tight, $ai'$ is not tight. If $p_t(i)\leq \pi(i)+\delta$ and $p_t(i')\geq \pi(i')-\delta$, then $a$ strictly prefers $i$ over $i'$.
\end{claim}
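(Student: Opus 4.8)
The plan is to compare the two utilities $u_{a,t}(i)=v_a(i)-p_t(i)$ and $u_{a,t}(i')=v_a(i')-p_t(i')$ directly, using tightness to rewrite $v_a(i)$ and the quantitative slack in the definition of $\delta$ to bound $v_a(i')$ away from $\pi(a)+\pi(i')$.

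First I would lower-bound the utility of the tight item. Since $ai$ is tight with respect to $\pi$, we have $v_a(i)=\pi(a)+\pi(i)$, so by the hypothesis $p_t(i)\le \pi(i)+\delta$ we get
\[
u_{a,t}(i)=v_a(i)-p_t(i)=\pi(a)+\pi(i)-p_t(i)\ge \pi(a)-\delta .
\]
Next I would upper-bound the utility of the non-tight item. Because $ai'$ is not tight, the term $\pi(a)+\pi(i')-v_a(i')$ is one of the quantities appearing in the minimum defining $\delta$, so $\pi(a)+\pi(i')-v_a(i')>2\delta$, i.e.\ $v_a(i')<\pi(a)+\pi(i')-2\delta$. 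Combined with $p_t(i')\ge \pi(i')-\delta$ this yields
\[
u_{a,t}(i')=v_a(i')-p_t(i')<\bigl(\pi(a)+\pi(i')-2\delta\bigr)-\bigl(\pi(i')-\delta\bigr)=\pi(a)-\delta .
\]
Chaining the two displays gives $u_{a,t}(i')<\pi(a)-\delta\le u_{a,t}(i)$, which is exactly the claimed strict preference of $a$ for $i$ over $i'$.

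There is essentially no obstacle here: the only thing to be careful about is that $\delta$ is defined with the factor $\tfrac12$ precisely so that the non-tightness slack exceeds $2\delta$, which is what lets the $\delta$-perturbations on both sides be absorbed while keeping the inequality strict; the assumption $\delta<\tfrac12\min\{\pi(i)\mid i\in\setIt\}$ is not needed for this particular claim but guarantees the prices stay non-negative elsewhere. I would also note that the existence of a valid $\delta$ (hence that the statement is non-vacuous) is exactly what Lemmas~\ref{lemma:dual2}(b) and~\ref{lem:all} provide.
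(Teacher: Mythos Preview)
Your proof is correct and is essentially identical to the paper's own argument: both pivot on the inequality chain $u_{a,t}(i')<\pi(a)-\delta\le u_{a,t}(i)$, obtained by combining tightness of $ai$, the $2\delta$ slack at the non-tight edge $ai'$, and the two price bounds. Your remarks about the role of the factor $\tfrac12$ and the irrelevance here of the $\min\{\pi(i)\}$ part of the definition of $\delta$ are accurate side observations.
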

\begin{proof}
Assume that $p_t(i)\leq \pi(i)+\delta$ and $p_t(i')\geq \pi(i')-\delta$. Since $ai$ is tight and $ai'$ is not tight, we have $v_a(i)=\pi(a)+\pi(i)$ and $v_a(i')+2\delta<\pi(a)+\pi(i')$ by the definition of $\delta$. Thus we get
\begin{align*}
u_{a,t}(i')
{}&{}\leq
v_a(i')-(\pi(i')-\delta)\\
{}&{}<
\pi(a)-\delta\\
{}&{}=
v_{a}(i)-(\pi(i)+\delta)\\
{}&{}\leq
u_{a,t}(i).
\end{align*}
Hence the utility of $a$ for item $i$ is strictly larger than for item $i'$, concluding the proof.
\end{proof}

Finally, the following two technical claim follows easily from the definitions.

\begin{claim}\label{cl:nottightut}
Let $a\in\setPl_t$ and $i\in\setIt_t$ be such that $ai$ is not tight. If $p_t(i)\geq \pi(i)-\delta$, then $u_{a,t}(i)<\pi(a)$.
\end{claim}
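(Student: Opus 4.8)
The plan is to unwind the definitions directly; this is a one-line estimate. First I would record what non-tightness buys us: by the choice of $\delta$ we have $\delta<\tfrac12\bigl(\pi(a)+\pi(i)-v_a(i)\bigr)$ for every non-tight edge $ai\in E$, hence $v_a(i)+2\delta<\pi(a)+\pi(i)$, i.e.\ $v_a(i)-\pi(i)<\pi(a)-2\delta$.

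Next I would plug in the price hypothesis $p_t(i)\ge\pi(i)-\delta$ into the definition $u_{a,t}(i)=v_a(i)-p_t(i)$, which gives $u_{a,t}(i)\le v_a(i)-\pi(i)+\delta$. Combining with the inequality from the previous step yields $u_{a,t}(i)<(\pi(a)-2\delta)+\delta=\pi(a)-\delta<\pi(a)$, which is exactly the claimed bound (with a bit of slack to spare). I would present this as a short displayed chain of inequalities, e.g.
\[
u_{a,t}(i)=v_a(i)-p_t(i)\le v_a(i)-\pi(i)+\delta<\pi(a)-\delta<\pi(a).
\]

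There is essentially no obstacle here: the only thing to be careful about is citing the correct half of the definition of $\delta$ (the first minimand, over non-tight edges, rather than the second one over $\pi$-values), and noting that this minimum is strictly positive precisely because $ai$ being non-tight means $\pi(a)+\pi(i)-v_a(i)>0$, so $\delta$ can indeed be chosen positive. Everything else is arithmetic, and the claim follows immediately.
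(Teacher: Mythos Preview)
Your proof is correct and follows essentially the same approach as the paper: both unwind the definition of $u_{a,t}(i)$, use the hypothesis $p_t(i)\ge\pi(i)-\delta$, and invoke the choice of $\delta$ relative to non-tight edges. The only cosmetic difference is that the paper uses the weaker consequence $v_a(i)<\pi(a)+\pi(i)-\delta$ (which already suffices) rather than your sharper $v_a(i)<\pi(a)+\pi(i)-2\delta$, so it lands directly on $u_{a,t}(i)<\pi(a)$ without the extra $-\delta$ of slack.
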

\begin{proof}
As $ai$ is not tight, the definition of $\delta$ implies $u_{a,t}(i)=v_a(i)-p_t(i)< (\pi(a)+\pi(i)-\delta)-(\pi(i)-\delta)=\pi(a)$ as stated.
\end{proof}

\begin{claim}\label{cl:tightut}
Let $a\in\setPl_t$ and $i\in\setIt_t$ be such that $ai$ is tight. If $p_t(i)=\pi(i)+\omega$ for some real number $\omega$, then $u_{a,t}(i)=\pi(a)-\omega$. 
\end{claim}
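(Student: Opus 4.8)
The plan is to unwind the two relevant definitions and combine them. Since the edge $ai$ is tight with respect to $\pi$, by definition we have the equality $v_a(i) = \pi(a) + \pi(i)$. The other ingredient is the definition of utility at time step $t$, namely $u_{a,t}(i) = v_a(i) - p_t(i)$, together with the hypothesis $p_t(i) = \pi(i) + \omega$. Substituting both of these in, one gets
\[
u_{a,t}(i) = v_a(i) - p_t(i) = \bigl(\pi(a) + \pi(i)\bigr) - \bigl(\pi(i) + \omega\bigr) = \pi(a) - \omega,
\]
which is exactly the claimed identity.

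There is essentially no obstacle here: the statement is a one-line algebraic manipulation that isolates the effect of perturbing the price of a tight item by $\omega$ away from its covering value, and it holds for an arbitrary real $\omega$ (no sign condition is needed, unlike in Claims~\ref{cl:prefer} and~\ref{cl:nottightut}). The only thing to be careful about is to invoke tightness in the correct direction — i.e.\ using the equality $\pi(a)+\pi(i) = w(ia) = v_a(i)$ rather than merely the covering inequality — but this is immediate from the definition of a tight edge recalled just before $G_\pi$ is introduced. I would simply present the displayed chain of equalities as the full proof.
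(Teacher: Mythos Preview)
Your proof is correct and essentially identical to the paper's own argument: both simply expand $u_{a,t}(i)=v_a(i)-p_t(i)$, substitute $v_a(i)=\pi(a)+\pi(i)$ from tightness and $p_t(i)=\pi(i)+\omega$ from the hypothesis, and cancel $\pi(i)$.
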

\begin{proof}
As $ai$ is tight, we get $u_{a,t}(i)=v_a(i)-p_t(i)=(\pi(a)+\pi(i))-(\pi(i)+\omega)=\pi(a)-\omega$ as stated.
\end{proof}

The set of agents not matched in $M_t$ is denoted by $R_t\coloneqq\{a\in\setPl_t\mid x^t_a=\emptyset\}$. Note that $\pi(a)=0$ for each $a\in R_1$ by Lemma~\ref{lemma:dual2}(b). In fact, we will maintain this property for each set $R_t$ throughout the pricing process. 

From this point, we discuss the ex-post and ex-ante cases separately as their proofs differ in that prices must be set differently. 

\subsection{Ex-post envy-free pricing}
\label{sec:ex-post_welfare}

The algorithm in \cite{berczi2021dual} for the unit-demand case updates the prices at each time step using an arbitrary weighted covering in the remaining graph. In order to adapt a similar idea for the ex-post case, we do this in a controlled manner which ensures that prices do not increase over time.

\begin{theorem}\label{thm:ex-post_welfare}
There exists a welfare-maximizing dynamic pricing scheme for the unit-demand ex-post envy-free pricing problem even if ties are broken by the agents. Furthermore, the optimal prices can be determined in polynomial time.
\end{theorem}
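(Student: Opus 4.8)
The plan is to realise the two properties \ref{it:a} and \ref{it:b} highlighted above for a suitable pricing scheme; together they give the theorem at once. Property \ref{it:a} guarantees that at each step the arriving agent buys along a legal edge of the current graph $G_t$, or buys nothing and is then avoidable by some maximum weight matching of $G_t$; since deleting a legal edge from a bipartite graph decreases the maximum matching weight by exactly that edge's weight while deleting such an avoidable agent leaves it unchanged, the weight bought so far plus the maximum weight still available in $G_{t+1}$ stays equal to the maximum weight of $G$ throughout, so the final allocation is a maximum weight matching and hence welfare-optimal, whatever the arrival order. Property \ref{it:b}, that item prices only decrease over time, will be what forces ex-post envy-freeness.

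Keeping the covering $\pi$ of Lemmas~\ref{lem:all} and~\ref{lemma:dual2} fixed, I would at step $t$ take a maximum weight matching $M_t$ of $G_t$ covering all vertices with $\pi>0$ (so $\pi\equiv0$ on $R_t$, an invariant I carry along), build $D_t$, and order its strongly connected components $C^t_1,\dots,C^t_{q_t}$ topologically. For any \emph{prefix} $S=C^t_1\cup\dots\cup C^t_{j}$, no tight edge can leave $S$ through an item -- matched pairs lie in a common component of $D_t$ and the topological order forbids arcs from later to earlier components -- so lowering $\pi$ on the items of $S$ and raising it on the agents of $S$ by a common small amount keeps $\pi$ a weighted covering of $w$. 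Doing this for the nested prefixes $j=1,\dots,q_t$, each of size $\alpha_t$, makes an item's price strictly smaller the earlier its component; performing it at every step with \emph{decreasing} sizes $\alpha_1>\alpha_2>\cdots$ (this is exactly where the slack $\varepsilon<\delta/(\nPl 2^\nPl)$ is spent, so that all accumulated discounts stay below $\delta$) yields prices $p_t(i)=\pi(i)-\mathrm{disc}_t(i)$ with $\mathrm{disc}_t(i)\ge0$ non-decreasing in $t$; this is \ref{it:b}.

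For \ref{it:a} fix the agent $a$ arriving at step $t$. As every $p_t$ lies within $\delta$ of $\pi$, Claims~\ref{cl:prefer} and~\ref{cl:nottightut} say $a$ never buys a non-tight item and that a non-tight item gives her utility below $\pi(a)$; so if $a$ has no tight edge in $G_t$ she buys nothing and is avoidable there. Otherwise Claim~\ref{cl:tightut} makes her utilities along tight edges equal to $\pi(a)+\mathrm{disc}_t(\cdot)$, so among her tight neighbours -- all in her own component or later -- she strictly prefers one in the earliest component: if $a$ is matched in $M_t$ this is her component and the item is legal by Claim~\ref{cl:legal}; if $a$ is unmatched (so $\pi(a)=0$) the item she picks lies on a weight-preserving alternating path to an agent with $\pi=0$, hence is legal, and one checks that $R_{t+1}$ again consists of agents with $\pi=0$. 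All comparisons used are strict -- tight over non-tight, earlier component over later, positive utility over buying nothing -- so nothing depends on how agents break ties, which is the agent-chooses statement. For ex-post envy-freeness, $a$ bought $x_a$ along a tight edge, so her utility is $\pi(a)+\mathrm{disc}_{\sigma(a)}(x_a)\ge\pi(a)$; for $t\ge\sigma(a)$ and $i\in\setIt_t$ a non-tight $i$ has $u_{a,t}(i)<\pi(a)$, while a tight one has $u_{a,t}(i)=\pi(a)+\mathrm{disc}_t(i)$, and it suffices that $\mathrm{disc}_t(i)\le\mathrm{disc}_{\sigma(a)}(x_a)$ -- true at step $\sigma(a)$ because $a$ preferred $x_a$ to the still-available $i$, and preserved afterwards because the decay of the $\alpha_t$ keeps the extra discount $i$ collects below the margin $\mathrm{disc}_{\sigma(a)}(x_a)-\mathrm{disc}_{\sigma(a)}(i)$. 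Polynomiality is clear, since $\pi$, the matchings $M_t$ and the component decompositions are all computable in polynomial time.

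The hard part is to produce one coherent choice of the prefixes (equivalently, of the perturbed coverings) that is simultaneously monotone in time -- discounts may only grow, yet at every step must respect the constantly changing component order of $D_t$ -- and correct for every possible arriving agent, handling uniformly the asymmetry between essential agents, who must buy along a same-component legal edge, and avoidable ones, who may decline or legally take an item outside their component. Concretely, one has to understand how the components of $D_{t-1}$ split when the step-$(t-1)$ agent and the item $x_{t-1}$ she bought leave, and re-insert the fragments into the topological order of $D_t$ without ever lowering a discount; this is what ultimately dictates how fast the sizes $\alpha_t$ must decay and forces the bound on $\varepsilon$.
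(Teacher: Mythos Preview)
Your outline has a real gap in the treatment of unmatched agents. You write that ``if $a$ is unmatched (so $\pi(a)=0$) the item she picks lies on a weight-preserving alternating path to an agent with $\pi=0$, hence is legal'', but this assertion is not justified and is in fact false. Since your scheme discounts \emph{every} item strictly below its $\pi$-value, an agent $a\in R_t$ with $\pi(a)=0$ who still has a tight neighbour $i$ in $G_t$ sees strictly positive utility for $i$ and therefore buys it; yet nothing guarantees that $i$ is legal for $a$ in $G_t$. Concretely, take $\setIt=\{i_1,i_2\}$, $\setPl=\{a_1,a_2,a_3\}$ with $v_{a_1}(i_1)=v_{a_1}(i_2)=2$, $v_{a_2}(i_1)=1$, $v_{a_3}(i_2)=1$, and all other valuations $0$. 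The covering of Lemma~\ref{lemma:dual2} is $\pi(a_1)=1$, $\pi(a_2)=\pi(a_3)=0$, $\pi(i_1)=\pi(i_2)=1$, and the tight edges are exactly $a_1i_1,a_1i_2,a_2i_1,a_3i_2$. With $M_1=\{a_1i_1,a_3i_2\}$ and arrival order $a_3,a_2,a_1$, agent $a_3$ first legally takes $i_2$; then in $G_2$ (item $i_1$, agents $a_1,a_2$) the only maximum matching covering $a_1$ is $M_2=\{a_1i_1\}$, so $R_2=\{a_2\}$. Now $a_2$ has the tight edge $a_2i_1$, your discount makes $u_{a_2,2}(i_1)>0$, and she buys $i_1$; but the only alternating path from $a_2$ ends at $a_1$ with $\pi(a_1)=1>0$, so $a_2i_1$ is \emph{not} legal in $G_2$. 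The final welfare is $2$, not the optimum $3$.

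This is precisely the obstacle that forces the paper to abandon a pure discounting scheme. The paper's pricing is two-tiered: it introduces the set $S_t$ of vertices reachable (in $D_t$) from $R_t$, prices items in $S_t$ \emph{above} $\pi$ by $\delta/2^t+j\varepsilon$, and prices items outside $S_t$ below $\pi$ by $\delta(1-1/2^t)-j\varepsilon$. The over-pricing on $S_t$ is what makes every agent in $R_t$ (and every matched agent in $S_t$ with $\pi=0$) see only negative utilities and decline; meanwhile agents outside $S_t$ never have tight edges into $S_t$, so the over-pricing does not disturb them. Monotonicity of prices then reduces to showing $S_{t+1}\subseteq S_t$, which follows from how $M_{t+1}$ is obtained from $M_t$. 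Your ``nested prefixes, growing discounts'' picture captures the $j\varepsilon$ tie-breaking inside each tier but misses the crucial split between the two tiers, and without it property~\ref{it:a} cannot be secured for the unmatched agents.
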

\begin{proof}
Throughout the algorithm, we maintain a maximum weight matching $M_t$ in the graph $G_t$ of remaining agents and items. Recall that $R_t$ denotes the set of agents not covered by the matching $M_t$. Initially, $\pi(a)=0$ for $a\in R_1$ by Lemma~\ref{lemma:dual2}(b), and we will maintain this property for each $a\in R_t$ throughout. 

We describe a general phase $t\in[\nPl]$ of the pricing process. We define the set \[S_t\coloneqq\{v\in V_t\mid \text{there exists a directed path to $v$ from an agent $a\in R_t$}\},\] see Figure~\ref{fig:ex-post2} for an example. In particular, $R_t\subseteq S_t$. Note that either $C^t_j\subseteq S_t$ or $C^t_j\cap S_t=\emptyset$ for each strongly connected component $C^t_j$ of $D_t$. The prices are then updated as follows:
\begin{equation*}
    p_t(i)=\begin{cases}
        \pi(i)+\delta/2^t+j\varepsilon & \text{if $i\in C^t_j$ such that $C^t_j\subseteq S_t$},\\
        \pi(i)-\delta(1-1/2^t)+j\varepsilon & \text{if $i\in C^t_j$ such that $C^t_j\cap S_t=\emptyset$}.
    \end{cases}
\end{equation*}
By the choice of $\delta$ and $\varepsilon$, the prices are non-negative. Let $a\in\setPl_t$ denote the agent who arrives at time step $t$. The next three claims together show that the prices satisfy property \ref{it:a}.

\begin{claim}\label{cl:postnormal1}
If $a\in V_t\setminus S_t$, or $a\in S_t\setminus R_t$ and $\pi(a)>0$, then she chooses an item that is legal for her in $G_t$. 
\end{claim}
\begin{proof}
Since $a\in V_t\setminus R_t$, the matching $M_t$ covers $a$ and hence $x_a$ is non-empty. Let $i\in\setIt_t$ be an arbitrary item distinct from $x_a$. If $ai$ is not tight, then the agent strictly prefers $x_a$ over $i$ by Claim~\ref{cl:prefer}. If $ai$ is tight but $i$ is in a different strongly connected component than $a$, then the index of the component of $i$ is strictly larger than that of $a$. Furthermore, by the definition of the set $S_t$, either both of them are in $S_t$ or $a\notin S_t$ and $i\in S_t$. These together imply that $p_t(x_a)-\pi(x_a)<p_t(i)-\pi(i)$, hence the agent strictly prefers $x_a$ over $i$ by Claim~\ref{cl:tightut}. Finally, if $ai$ is tight and $i$ is in the same strongly connected component as $a$, then $ai$ is legal by Claim~\ref{cl:legal}. The utility of $a$ is non-negative for such items by the choice of $\delta$ and the assumptions on $a$, hence the claim follows.
\end{proof}

\begin{claim}\label{cl:postnormal2}
If $a\in S_t\setminus R_t$ and $\pi(a)=0$, then she takes no item at all and there exists a maximum weight matching in $G_t$ that does not cover $a$. 
\end{claim}
\begin{proof}
Let $i\in\setIt_t$ be an arbitrary item. If $ai$ is not tight, then the utility of $a$ for $i$ is negative by Claim~\ref{cl:nottightut}. If $ai$ is tight, then $i\in S_t$ by the definition of the set $S_t$. This implies $p_t(i)>\pi(i)$ by the definition of the prices, hence the utility of $a$ for $i$ is negative by Claim~\ref{cl:tightut}. However, in such cases there exists a directed path $P$ to $a$ from an agent $a'\in R_t$. The fact that $P$ consist of tight edges together with $\pi(a)=\pi(a')=0$ imply that $M_t\triangle P$ is also a maximum weight matching in $G_t$.
\end{proof}

\begin{claim}\label{cl:postnone}
If $a\in R_t$, then she takes no item at all.
\end{claim}
\begin{proof}
By assumption, we have $\pi(a)=0$. Let $i\in\setIt_t$ be an arbitrary item. If $ai$ is not tight, then the utility of $a$ for $i$ is negative by Claim~\ref{cl:nottightut}. If $ai$ is tight, then $i\in S_t$ by the definition of the set $S_t$. This implies $p_t(i)>\pi(i)$ by the definition of the prices, hence the utility of $a$ for $i$ is negative by Claim~\ref{cl:tightut}.
\end{proof}

The matching $M_t$, and implicitly the set $R_t$, is updated as follows. If $a\in V_t\setminus S_t$, or $a\in S_t\setminus R_t$ and $\pi(a)>0$, then $a$ takes an item $i$ from her strongly connected component. If $ai\in M_t$, then set $M_{t+1}\coloneqq M_t\setminus\{ai\}$. Otherwise, let $C$ be a directed cycle of $D_t$ containing the arc $ai$, and set $M_{t+1}\coloneqq (M_t\triangle C)\setminus\{ai\}$. In this case, we have $R_{t+1}=R_t$. If $a\in S_t$ and $\pi(a)=0$, then consider the directed path $P$ to $a$ from an agent $a'\in R_t$, and set $M_{t+1}\coloneqq (M_t\triangle P)\setminus\{ai\}$, implying $R_{t+1}=R_t\setminus\{a'\}$. Finally, if $a\in R_t$, then $M_{t+1}\coloneqq M_t$, hence $R_{t+1}=R_t$. 

It remains to verify that the pricing scheme satisfies property \ref{it:b}, which is done by the following statement.

\begin{claim}\label{cl:decrease}
The price of any item does not increase over time. 
\end{claim}
\begin{proof}
At each phase of the algorithm, the price of an item $i$ is obtained by shifting its original weighted covering value. Though the structure of the directed graph $D_t$ and therefore the index of the strongly connected component containing $i$ might change from phase to phase, the choice of $\varepsilon$ ensures that 
$\pi(i)+\delta/2^t>\pi(i)+\delta/2^{t+1}+n\varepsilon$ and $\pi(i)-\delta(1-1/2^t)>\pi(i)-\delta(1-1/2^{t+1})+n\varepsilon$. Hence, in order to verify the claim, it suffices to show that $S_{t+1}\subseteq S_t$. 

Note that $M_{t+1}$ is chosen in such a way that no arc of $D_{t+1}$ leaves the set $S_t\cap\setIt_{t+1}$. Indeed, $D_{t+1}$ is obtained from $D_t$ by possibly reorienting a directed cycle or a directed path that lies completely in $S_t$, and then deleting an agent and possibly an item. These steps do not result in a directed arc leaving $S_t\cap\setIt_{t+1}$, implying $S_{t+1}\subseteq S_t$.  
\end{proof}

By Claims~\ref{cl:postnormal1}-\ref{cl:postnone}, if the next agent is covered by the matching $M_t$ then she either chooses an item that is legal for her, or $M_{t+1}$ is also a maximum weight matching of $G_t$. Otherwise, she does not take any of the items. This implies that the resulting allocation corresponds to a maximum weight matching of $G$ and hence maximizes social welfare. By Claim~\ref{cl:decrease}, the prices do not increase over time. This implies that the solution is ex-post envy-free, concluding the proof of the theorem.  
\end{proof}

\subsection{Ex-ante envy-free pricing}
\label{sec:ex-ante_welfare}

To give an algorithm for the ex-ante case, we adopt the proof of Theorem~\ref{thm:ex-post_welfare}. However, to ensure that the final prices and allocation form an ex-ante envy-free solution, prices have to be updated differently.

\begin{theorem}\label{thm:ex-ante_welfare}
There exists a welfare-maximizing dynamic pricing scheme for the unit-demand ex-ante envy-free pricing problem even if ties are broken by the agents. Furthermore, the optimal prices can be determined in polynomial time.
\end{theorem}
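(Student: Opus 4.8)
The plan is to keep the skeleton of the proof of Theorem~\ref{thm:ex-post_welfare}: the fixed covering $\pi$ of Lemma~\ref{lemma:dual2}, the constants $\delta$ (shrunk if necessary to lie below half of every positive value of $\pi$) and $\varepsilon$, the maintained maximum weight matching $M_t$ of $G_t$ with its uncovered set $R_t$ on which $\pi\equiv 0$, the digraph $D_t$ built by adding dummy copies of the matching edges, and its strongly connected components $C^t_1,\dots,C^t_{q_t}$ in topological order. Claims~\ref{cl:legal}, \ref{cl:prefer}, \ref{cl:nottightut} and~\ref{cl:tightut} only involve $\pi$, $\delta$ and the sign of $p_t(i)-\pi(i)$, so they are reused verbatim, and so are the matching-update rule (reroute the directed cycle through the arc the arriving agent uses, reroute an $M_t$-alternating path from $R_t$ when the arriving agent has zero covering value, do nothing when she was already uncovered), the preservation of $\pi\equiv 0$ on $R_{t+1}$, and the fact that in every update each item of $G_t$ except the one just sold stays covered, so all items are eventually allocated. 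The only thing to redesign is the price rule, which now must satisfy property~\ref{it:b} in the ex-ante direction, $p_t(i)\le p_{t+1}(i)$.

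This redesign, together with the companion set $S_t$, is where the ex-ante case genuinely differs. For ex-ante no item may ever be priced below its $\pi$-value, since an agent with $\pi(a)=0$ who ends up with nothing would otherwise prefer such an item when looking back at the starting prices. One natural choice is to price \emph{every} item strictly above $\pi$ with a time-increasing perturbation — say $p_t(i)=\pi(i)+\delta(1-4^{-t})+j\varepsilon$ on at-risk components $C^t_j\subseteq S_t$ and $p_t(i)=\pi(i)+\delta(1-2\cdot4^{-t})+j\varepsilon$ elsewhere, with $S_t$ the arc-closed set of vertices reachable from $R_t$ in $D_t$ and $\varepsilon$ small enough to be swallowed by the minimal one-step growth of these functions — and then check, as in Claim~\ref{cl:decrease}, that the only price-lowering transition (a component leaving $S_t$) is actually a non-decrease, because the second perturbation at the new step has already overtaken the first perturbation at the old one. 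With all perturbations positive, Claim~\ref{cl:nottightut} gives that every agent with $\pi(a)=0$ has negative utility for every item, so she takes nothing and, exactly as in Claims~\ref{cl:postnormal2} and~\ref{cl:postnone}, is dropped from the matching by rerouting an alternating path from $R_t$; an agent with $\pi(a)>0$ takes a legal item in her own component, because by Claims~\ref{cl:prefer}, \ref{cl:tightut} and~\ref{cl:legal} the index term $j\varepsilon$ and the forward direction of arcs in the topological order make her strictly prefer such items, and $\delta<\pi(a)$ makes their utility nonnegative. Property~\ref{it:a} then gives welfare-optimality. Finally, because prices only rise, each item is most attractive at step $1$, so the ex-ante envy-free condition is equivalent to every agent receiving an item that already maximizes her utility at the initial prices $p_1$; this would be established by checking that the legal item handed to an agent upon arrival is, among her tight neighbours, one of smallest step-$1$ perturbation. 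All ingredients are polynomial-time computable.

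The step I expect to be the crux is that last verification, together with the calibration of the perturbation. The two requirements — at-risk items strictly above $\pi$ at every step (so that an uncovered agent never attains nonnegative utility and never disturbs $M_t$) and all prices non-decreasing — pull against each other at the moment an item leaves $S_t$, which is why one cannot simply time-reverse the $\delta/2^t$ and $\delta(1-1/2^t)$ of the ex-post proof but must stagger two increasing functions with the right phase shift and take $\varepsilon$ correspondingly smaller. And since ex-ante envy-freeness compares with the \emph{initial} prices, one has to follow, through all the cycle- and path-reroutings, that the item an agent is eventually given was never beaten at step $1$ by another of her tight neighbours — equivalently, that her final component lies no later in the step-$1$ topological order than any component she has a tight edge into. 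Making that invariant precise and showing it is preserved by each matching update is where I expect the real effort to go.
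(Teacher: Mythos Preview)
Your plan has a genuine gap at its core. You propose to price every item strictly above its $\pi$-value at all times, and conclude that every agent $a$ with $\pi(a)=0$ takes nothing. But this can destroy welfare-optimality: consider a single item $i$ with $\pi(i)=1$ and two agents $a_1,a_2$ both valuing $i$ at $1$, so $\pi(a_1)=\pi(a_2)=0$. In every maximum weight matching one of the two agents receives $i$, yet in your scheme both agents see only negative utility and walk away empty-handed, yielding welfare $0$ instead of $1$. More generally, Lemma~\ref{lemma:dual2}(b) only guarantees that each individual $\pi=0$ agent is uncovered by \emph{some} maximum weight matching; it does not say they can all be uncovered simultaneously. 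Property~\ref{it:a} therefore fails at the step where the last such agent arrives: she is covered by $M_t$, the set $R_t$ is empty, there is no alternating path from $R_t$ to reroute along, and she is covered by every maximum weight matching of $G_t$---yet she takes nothing. Your intuition that ``for ex-ante no item may ever be priced below its $\pi$-value'' is precisely what has to be abandoned.

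The paper resolves this by choosing a different companion set and, crucially, by allowing some prices to sit \emph{below} $\pi$. Their $S_t$ is the set of vertices from which one can reach, in $D_t$, a \emph{matched} agent with $\pi=0$; items in $S_t$ are priced at $\pi(i)-\delta/2^t+j\varepsilon$ while items outside $S_t$ are priced at $\pi(i)+\delta(1-1/2^t)+j\varepsilon$. A matched agent with $\pi(a)=0$ then lies in $S_t$ together with her matched item, so she sees positive utility and buys; an unmatched agent in $R_t\cap S_t$ who buys an item $i\in S_t$ can be swapped along the directed path to a matched $\pi=0$ agent, which certifies that $i$ was legal for her. Monotonicity of prices again reduces to $S_{t+1}\subseteq S_t$, now for this reversed reachability set, and the transition of an item out of $S_t$ is a jump from the below-$\pi$ regime to the above-$\pi$ regime---automatically an increase. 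As a side remark, the extra invariant you flag at the end (tracking the step-$1$ component index of the item each agent eventually receives through all reroutings) is not something the paper needs: once prices are shown to be non-decreasing, ex-ante envy-freeness is asserted directly, with no further bookkeeping.
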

\begin{proof}   
Similarly to the ex-post case, we maintain a maximum weight matching $M_t$ in the graph $G_t$ of remaining agents and items, and denote by $R_t$ the set of agents not covered by the matching $M_t$. Initially, $\pi(a)=0$ for $a\in R_1$ by Lemma~\ref{lemma:dual2}(b), and we will maintain this property for each $a\in R_t$ throughout. 

We describe a general phase $t\in[\nPl]$ of the pricing process. We define the set \[S_t\coloneqq\{v\in V_t\mid \text{there exists a directed path from $v$ to an agent $a\in\setPl_t\setminus R_t$ with $\pi(a)=0$}\},\]see Figure~\ref{fig:ex-post3} for an example. Note that either $C^t_j\subseteq S_t$ or $C^t_j\cap S_t=\emptyset$ for each strongly connected component $C^t_j$ of $D_t$. The prices are then updated as follows:
\begin{equation*}
    p_t(i)=\begin{cases}
        \pi(i)-\delta/2^t+j\varepsilon & \text{if $i\in C^t_j$ such that $C^t_j\subseteq S_t$},\\
        \pi(i)+\delta(1-1/2^t)+j\varepsilon & \text{if $i\in C^t_j$ such that $C^t_j\cap S_t=\emptyset$}.
    \end{cases}
\end{equation*}
By the choice of $\delta$ and $\varepsilon$, the prices are non-negative. Let $a\in\setPl_t$ denote the agent who arrives at time step $t$. The next three claims together show that the prices satisfy property \ref{it:a}.

\begin{claim}\label{cl:normal}
If $a\in V_t\setminus R_t$, then she chooses an item that is legal for her in $G_t$. 
\end{claim}
\begin{proof}
Since $a\in V_t\setminus R_t$, the matching $M_t$ covers $a$ and hence $x_a$ is non-empty. Let $i\in\setIt_t$ be an arbitrary item distinct from $x_a$. If $ai$ is not tight, then the agent strictly prefers $x_a$ over $i$ by Claim~\ref{cl:prefer}. If $ai$ is tight but $i$ is in a different strongly connected component than $a$, then 
the index of the component of $i$ is strictly larger than that of $a$. Furthermore, by the definition of the set $S_t$, either both or none of them are contained in $S_t$. These together imply that $p_t(x_a)-\pi(x_a)<p_t(i)-\pi(i)$, hence the agent strictly prefers $x_a$ over $i$ by Claim~\ref{cl:tightut}. Finally, if $ai$ is tight and $i$ is in the same strongly connected component as $a$, then $ai$ is legal by Claim~\ref{cl:legal}. The utility of $a$ is non-negative for such items by the choice of $\delta$, hence the claim follows.
\end{proof}

\begin{claim}\label{cl:none}
If $a\in R_t\setminus S_t$, then she takes no item at all.
\end{claim}
\begin{proof}
By assumption, we have $\pi(a)=0$. Let $i\in\setIt_t$ be an arbitrary item. If $ai$ is not tight, then the utility of $a$ for $i$ is negative by Claim~\ref{cl:nottightut}. If $ai$ is tight, then $i\notin S_t$ by the definition of the set $S_t$. This implies $p_t(i)>\pi(i)$ by the definition of the prices, hence the utility of $a$ for $i$ is negative by Claim~\ref{cl:tightut}.
\end{proof}

\begin{claim}\label{cl:path}
If $a\in R_t\cap S_t$, then she either chooses an item that is legal for her in $G_t$, or takes no item at all.
\end{claim}
\begin{proof}
By assumption, we have $\pi(a)=0$. Let $i\in\setIt_t$ be an arbitrary item. If $ai$ is not tight, then the utility of $a$ for $i$ is negative by Claim~\ref{cl:nottightut}. If $ai$ is tight but $i\notin S_t$, then $p_t(i)>\pi(i)$ by the definition of the prices, hence the utility of $a$ for $i$ is negative by Claim~\ref{cl:tightut}. If $ai$ is tight and $i\in S_t$, then there exists a directed path $P$ from $a$ to an agent $a'$  in $D_t$ which is covered by $M_t$ and $\pi(a')=0$. The fact that $P$ consists of tight edges together with $\pi(a)=\pi(a')=0$ imply that $M_t\triangle P$ is also a maximum weight matching in $G_t$, there fore $i$ is legal for $a$. 
\end{proof}

The matching $M_t$, and implicitly the set $R_t$, is updated as follows. If $a\in V_t\setminus R_t$, then $a$ takes an item $i$ from her strongly connected component. If $ai\in M_t$, then set $M_{t+1}\coloneqq M_t\setminus\{ai\}$. Otherwise, let $C$ be a directed cycle of $D_t$ containing the arc $ai$, and set $M_{t+1}\coloneqq (M_t\triangle C)\setminus\{ai\}$. In this case, we have $R_{t+1}=R_t$. If $a\in R_t\setminus S_t$ or $a\in R_t\cap S_t$ but $a$ takes no item, then set $M_{t+1}\coloneqq M_t$, implying $R_{t+1}=R_t\setminus\{a\}$. Finally, if $a\in R_t\cap S_t$ and $a$ takes an item $i$, then consider the directed path $P$ from $a$ to an agent $a'$ which is covered by $M_t$ and $\pi(a')=0$, and set $M_{t+1}\coloneqq (M_t\triangle P)\setminus\{ai\}$. Since we have $R_{t+1}=R_t\setminus\{a\}\cup\{a'\}$ and $\pi(a)=0$, the property that each agent in $R_{t+1}$ has $\pi$ value $0$ holds.

It remains to verify that the pricing scheme satisfies property \ref{it:b}, which is done by the following statement.

\begin{claim}\label{cl:increase}
The price of any item does not decrease over time. 
\end{claim}
\begin{proof}
At each phase of the algorithm, the price of an item $i$ is obtained by shifting its original weighted covering value. Though the structure of the directed graph $D_t$ and therefore the index of the strongly connected component containing $i$ might change from phase to phase, the choice of $\varepsilon$ ensures that 
$\pi(i)-\delta/2^t+n\varepsilon<\pi(i)-\delta/2^{t+1}$ and $\pi(i)+\delta(1-1/2^t)+n\varepsilon<\pi(i)+\delta(1-1/2^{t+1})$. Hence, in order to verify the claim, it suffices to show that $S_{t+1}\subseteq S_t$. 

Note that $M_{t+1}$ is chosen in such a way that no arc of $D_{t+1}$ enters the set $S_t\cap\setIt_{t+1}$. Indeed, $D_{t+1}$ is obtained from $D_t$ by possibly reorienting a directed cycle or a directed path that lies completely in $S_t$, and then deleting an agent and possibly an item. These steps do not result in a directed arc entering $S_t\cap\setIt_{t+1}$, implying $S_{t+1}\subseteq S_t$.     
\end{proof}

By Claims~\ref{cl:normal}-\ref{cl:path}, if the next agent is covered by the matching $M_t$ then she chooses an item that is legal for her. Otherwise, she either chooses an item that is legal for her, or does not take any of the items. This implies that the resulting allocation corresponds to a maximum weight matching of $G$ and hence maximizes social welfare. By Claim~\ref{cl:increase}, the prices do not decrease over time. This implies that the solution is ex-ante envy-free, concluding the proof of the theorem.  
\end{proof}

\section{Maximizing the revenue}
\label{sec:revenue}

When it comes to revenue maximization in the static setting, the problem is not only hard to solve but also to approximate within a reasonable factor, and the difficulty stems from the lack of strong upper bounds. Clearly, the maximum weight of a matching is an upper bound on the total revenue achievable through pricing mechanisms, but the gap between optimal revenue and maximum weight of a matching may be $O(\log(n))$. Indeed, consider a market with items $\setIt=\{i_1,\dots,i_\nPl\}$ and agents $\setPl=\{a_1,\dots,a_\nPl\}$. Let the valuations be defined as $v_{a_j}(i_k) \coloneqq 1/j$ for $1\leq j\leq \nPl$ and $ j\leq k\leq\nIt$ and $0$ otherwise, see Figure~\ref{fig:harm} for an illustration. Then there is a unique maximum weight matching between agents and items that consists of the edges $i_ja_j$ for $1\leq j\leq n$ with total weight $\sum_{j=1}^n 1/j$. For any pair of envy-free static pricing and allocation, if an agent $a_j$ receives an item $i_k$ at some price $p(i_k)$, then the price of all the other items must be at least $p(i_k)$ to ensure that agent $a_j$ is not envious. On the other hand, the price $p(i_k)$ cannot be greater than $1/j$ as otherwise the utility of agent $a_j$ for item $i_k$ is negative. These observations together imply that the price of each item sold is at most $1/j$ where $j$ is the largest index for which agent $a_j$ receives an item. Hence the total revenue is at most $j\cdot1/j=1$, leading to an $O(\log(n))$ gap as stated.

\begin{figure}[t!]
\centering
\begin{subfigure}[t]{0.47\textwidth}
  \centering
  \includegraphics[width=.8\linewidth]{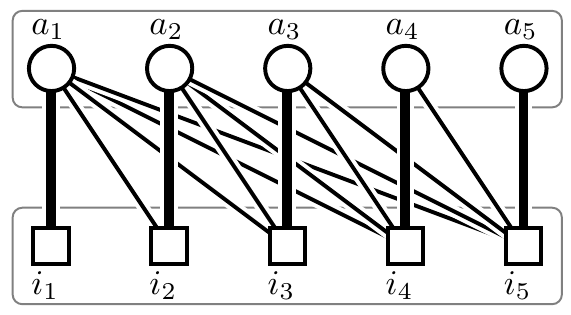}
  \caption{The weight of an edge $a_ji_k$ is $1/j$, while missing edges have weight $0$. Thick edges correspond to the unique maximum weight matching of total weight $\sum_{j=1}^n 1/j$.}
  \label{fig:harmonic}
\end{subfigure}\hfill
\begin{subfigure}[t]{0.47\textwidth}
  \centering
  \includegraphics[width=.8\linewidth]{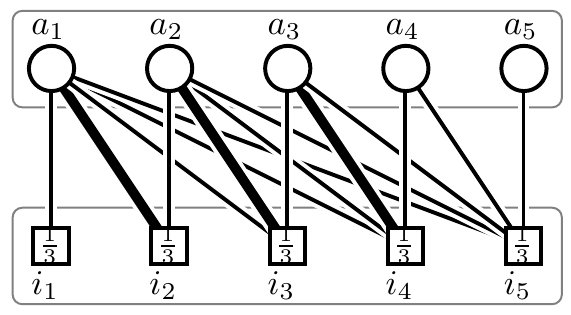}
  \caption{Values on items and thick edges correspond to a pair of envy-free pricing and allocation, respectively. The revenue is always less than or equal to $1$.}
  \label{fig:harmonicb}
\end{subfigure}
\caption{Illustration of $O(\log(n))$ gap between the optimal revenue and maximum weight of a matching in the case of envy-free static pricing.}
\label{fig:harm}
\end{figure}

In what follows, we turn our attention to the revenue maximization problem in a dynamic environment with fairness constraints.

\subsection{Hardness results}
\label{sec:hardness}

When the solution is required to be strongly envy-free, then the dynamic setting does not make a difference compared to the static one, as shown by the following theorem.

\begin{theorem}\label{thm:strong_revenue}
Maximizing the revenue in the unit-demand strongly envy-free dynamic pricing problem is APX-hard, even if the agents' ordering is chosen and ties are broken by the seller.
\end{theorem}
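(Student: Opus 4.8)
The plan is to reduce from a known APX-hard problem in the "static envy-free pricing" world—most naturally the unit-demand min-buying or single-minded pricing problem shown APX-hard by Guruswami et al.~\cite{guruswami2005profit} (equivalently, via a gadget, from vertex cover on bounded-degree graphs)—and to show that the strongly envy-free dynamic pricing problem is, for the purpose of revenue maximization, the same problem as its static counterpart. The key structural observation is the one already flagged in Remark~\ref{rem:strong_welfare}: if $v_a(i) = v_b(i)$ for two agents $a,b$ and an item $i$, then strong envy-freeness forces $p_t(i)$ to be constant in $t$ at any time step where either agent can still act, since otherwise one of them would envy the other's access to $i$ at a cheaper time. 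I would first make this precise: given valuations from the reduction, I would attach to each item $i$ a large block of "dummy" agents, all valuing only $i$ (and at the same value, say the maximum relevant price), spread so that at least one dummy agent for $i$ is still present at every time step. Strong envy-freeness among the dummies then pins $p_t(i)$ to a single value $p(i)$ for all $t$, so the dynamic scheme collapses to a static pricing $p\colon\setIt\to\mathbb{R}_+$.

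The second step is to argue that, once prices are effectively static, the revenue-maximization question is exactly static envy-free pricing, regardless of the arrival order (which the seller controls here, but this does not help) and regardless of how ties are broken (which the seller also controls). Since $p_t \equiv p$, the utility $u_{a,t}(i) = v_a(i) - p(i)$ does not depend on $t$, so each agent's set of utility-maximizing items is fixed; by the seller-chooses convention the seller picks, for each arriving agent, any item in that set (or none if all utilities are non-positive), and the induced allocation is precisely an envy-free allocation for the static pricing $p$ in the sense of Section~\ref{sec:preliminaries}. Hence the optimal achievable revenue over all strongly envy-free dynamic schemes equals the optimal revenue of static envy-free pricing on the underlying instance. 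I would need to check the (easy) direction that any static envy-free pair $(p,\mathbf{x})$ is realizable as a strongly envy-free dynamic scheme—just keep prices constant and let the seller reproduce $\mathbf{x}$—so the two optima coincide exactly, not merely up to a constant; this preserves the APX-hardness gap verbatim.

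The main obstacle is the dummy-agent construction: I must ensure the dummies (i) do not themselves create profitable-looking revenue that an algorithm could exploit to escape the reduction, and (ii) are consistent with the valuation restrictions of whichever base problem I reduce from (e.g., single-minded agents). The cleanest fix is to give each dummy for item $i$ valuation $v(i) = p^\star(i)$, where $p^\star(i)$ is the price that the intended static solution would post, and to make the number of dummies large but with per-dummy revenue contribution either fixed across all reasonable pricings or easily subtracted off; because the seller controls the order, I can also interleave dummies so that exactly one per item survives to each round, which is all the argument needs. A subtlety is that if a dummy's value $v(i)$ is set too low, raising $p_t(i)$ above it is no longer forbidden by that dummy's envy—so I would instead use a continuum of dummies, or simply fix the target price in advance and take dummy value equal to it, noting that any pricing exceeding it loses all dummy revenue and all base revenue tied to that item, hence is dominated. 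Modulo this bookkeeping, APX-hardness of static unit-demand envy-free pricing transfers directly, giving the theorem.
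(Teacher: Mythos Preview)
Your overall strategy---reduce the strongly envy-free dynamic problem to the static envy-free pricing problem of Guruswami et al.---is the right one, and it is exactly what the paper does. But the dummy-agent gadgetry is an unnecessary detour, and the difficulties you identify with it (dummies do not constrain prices above their own valuation; their revenue contribution must be neutralized; ``fixing the target price in advance'' is circular) are real and not actually discharged in your sketch.

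The paper avoids all of this by observing that no instance modification is needed: the equivalence between the dynamic strongly envy-free optimum and the static envy-free optimum holds on \emph{every} instance, not just on instances engineered with dummies. Concretely, take any strongly envy-free dynamic solution $(\sigma,\mathbf{p},\mathbf{x})$. Define a static price vector $p$ by setting $p(x_a)\coloneqq p_{\sigma(a)}(x_a)$ for each sold item and $p(i)\coloneqq +\infty$ for unsold items. The allocation $\mathbf{x}$ is then envy-free under $p$: strong envy-freeness says $x_a\in\argmax\{v_a(i)-p_t(i)\mid t\in[\nPl],\ i\in\setIt_t\}$, and every finite static price $p(i)$ equals some $p_t(i)$ that appeared in this maximization, so $x_a\in\argmax\{v_a(i)-p(i)\mid i\in\setIt\}$. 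The revenues coincide by construction. Together with the easy direction you already noted (static $\Rightarrow$ dynamic by keeping prices constant), the two optima are equal on the nose, and APX-hardness transfers immediately.

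The point you missed is that strong envy-freeness does not need to \emph{force} prices to be constant in time; it only needs to guarantee that the \emph{selling} prices already form a static envy-free solution, which follows directly from the definition. Once you see this, the whole dummy apparatus---and the unresolved bookkeeping around it---falls away.
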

\begin{proof}
Let $\sigma$ be an ordering of the agents, $p_1,\dots,p_\nPl$ be prices, and $\mathbf{x}$ be an allocation that maximizes the revenue. Recall that $x_a$ denotes the item received by agent $a\in\setPl$ if exists, otherwise $x_a$ is the empty set. We show that there exist static prices and an envy-free allocation resulting in the same revenue. As the reverse always holds, that is, any envy-free allocation with respect a static pricing can be seen as a strongly envy-free solution in the dynamic setting, this proves the theorem.

We define a static pricing as follows: for each agent $a\in\setPl$ with $x_a\neq\emptyset$, set $p(x_a)\coloneqq p_{\sigma(a)}(x_a)$, that is, we define the price of an allocated item to be the price at which it was sold. For the remaining items, set the price to $+\infty$. We claim that the allocation $\mathbf{x}$ is envy-free with respect to the static pricing $p$, and hence has the same revenue as the dynamic solution. Indeed, this follows from the definition of strong envy-freeness, as $x_a\in\argmax\{v_{a}(i)-p_t(i)\mid t\in[\nPl],i\in\setIt\}$ and $p(x_a)=p_{\sigma(a)}(x_a)$ imply $x_a\in \argmax\{v_a(i)-p(i)\mid i\in\setIt\}$ for each $a\in\setPl$. 
\end{proof}

Unfortunately, the problem remains hard for weaker notions of envy-freeness. Our proof follows the main idea of the proof of Guruswami et al.~\cite{guruswami2005profit} for the APX-hardness of revenue maximization. 

\begin{theorem}\label{thm:ex_revenue_hard}
Maximizing the revenue in the unit-demand ex-post and ex-ante envy-free dynamic pricing problems are APX-hard, even if the agents' ordering is known in advance and ties are broken by the seller.
\end{theorem}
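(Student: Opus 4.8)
The plan is to adapt the reduction of Guruswami et al.~\cite{guruswami2005profit} that proves APX-hardness of static unit-demand envy-free pricing, and to show that it is robust under the extra freedom of dynamic prices. Starting from an instance $\mathcal{J}$ of the APX-hard problem from which they reduce, their construction outputs a unit-demand market in which the optimum static envy-free revenue is an affine, gap-preserving function of the optimum of $\mathcal{J}$. I would feed the same market into the dynamic model, together with a small gadget of auxiliary items and consumers and a fixed (known in advance) arrival order, chosen so that the optimum ex-post --- respectively ex-ante --- envy-free dynamic revenue equals the optimum static envy-free revenue of the underlying market. The inapproximability of $\mathcal{J}$ then transfers directly, giving APX-hardness already when the order is predetermined and ties are broken by the seller.

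One inequality is free and holds for every market, exactly as in the proof of Theorem~\ref{thm:strong_revenue}: a static envy-free pair $(p,\mathbf{x})$ induces a dynamic solution with constant prices $p_t\equiv p$ in which, since $x_a$ maximizes agent $a$'s utility over all of $\setIt$ and hence over every shrinking set $\setIt_t$, the seller may break ties so that $a$ takes $x_a$; this solution is simultaneously ex-post and ex-ante envy-free and has revenue $R(p,\mathbf{x})$. So the dynamic optimum is always at least the static one, and the content of the theorem is the reverse inequality in the constructed instances. In the ex-post case the only way dynamic pricing can strictly help is to sell an item $i$ early --- necessarily at a modest price --- so that $i$ leaves the market before a later consumer who would envy it at that price arrives. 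The gadget and the order are designed to rule this out: each consumer of the base construction that is relevant to the gap is scheduled to arrive while every item it values is still present, so that its ex-post constraints reproduce the full list of static envy inequalities, while the auxiliary consumers scheduled early are given valuations so small that removing any item can never earn more than it forfeits. Given an optimal ex-post dynamic solution I would then set, for each sold item $i$, the static price $p(i)\coloneqq p_{\sigma(a_i)}(i)$, where $a_i$ is the agent who bought $i$, and $p(i)\coloneqq+\infty$ otherwise; the scheduling then guarantees that the original allocation is static envy-free with respect to $p$ and realizes the same revenue, so the dynamic optimum does not exceed the static one.

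The ex-ante case is handled symmetrically, with the direction of the order reversed. The relevant subtlety is that an ex-ante consumer still ``sees'' items that were removed before her arrival, at whatever prices they carried while available, so hiding an item no longer relaxes her envy constraints; the remaining failure mode is that the sale price of an item is its \emph{minimum} over time, which the gadget neutralizes by forcing item prices to be constant over the window spanned by the gap-relevant consumers, after which the same extraction of a static solution goes through. The main obstacle is the gadget design and its analysis: one must verify, for \emph{every} ex-post (resp. ex-ante) envy-free dynamic solution and not merely an optimal one, that no item can be profitably hidden, and at the same time keep the gadget's own revenue contribution exactly accountable so that the multiplicative gap of $\mathcal{J}$ is preserved up to a constant factor. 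Arranging for the gap-relevant base consumers to be pairwise non-competing --- so that a single predetermined order can let each of them meet all the items it cares about --- without inflating the gap is the delicate point; the rest is the bookkeeping already present in~\cite{guruswami2005profit}.
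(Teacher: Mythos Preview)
Your proposal is a plan rather than a proof: the central object---the ``gadget of auxiliary items and consumers'' that is supposed to force the dynamic optimum to coincide with the static one---is never constructed, and the key implication (``the scheduling then guarantees that the original allocation is static envy-free with respect to $p$'') is asserted, not proved. You yourself flag the gadget design and the requirement that the gap-relevant consumers be pairwise non-competing as ``the main obstacle'' and ``the delicate point''; until that is resolved, there is no argument.

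More importantly, the paper shows that no gadget is needed at all. It works directly with the Vertex Cover instance underlying the Guruswami et al.\ reduction (for each vertex $z$ four vertex-items valued at~$2$ by the vertex-agent $z$; for each edge $e=zw$ an edge-agent valuing the eight items $z_1,\dots,z_4,w_1,\dots,w_4$ at~$1$) and simply fixes the order: edge-agents first, vertex-agents last for ex-post, and the reverse for ex-ante. The ``dynamic $\le$ static'' direction is then a short exchange argument rather than a structural gadget: in an optimal ex-post solution one may assume all prices are in $\{1,2\}$; if some edge-agent $e=zw$ is unserved, lowering the price of one remaining $z_i$ to~$1$ (and, to preserve ex-post envy-freeness, of the other surviving $z$-items at the later steps) gains~$1$ from $e$ and loses at most~$1$ from the vertex-agent $z$, so without loss of generality every edge-agent is served and the vertices priced at~$1$ form a vertex cover. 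This yields revenue exactly $m+2n-k$, and since $m=3n/2$ and $k\ge n/2$ the affine relation preserves the APX gap. The ex-ante case is symmetric with the order reversed.

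So the conceptual gap in your approach is the belief that auxiliary consumers are needed to neutralize ``profitable hiding.'' They are not: a judicious choice of arrival order alone already makes the relevant envy constraints equivalent to the static ones on this particular instance, and the remaining work is the one-line exchange argument above. I would drop the gadget program and carry out this direct analysis instead.
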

\begin{proof}
The proof is by reduction from \textsc{Vertex Cover} in 3-regular graphs. Given a $3$-regular graph $G=(V,E)$, \textsc{Vertex Cover} asks for a minimum number of vertices that includes at least one endpoint of every edge of the graph. This problem was shown to be APX-hard in~\cite{fleischner2010maximum}. 

Let $G=(V,E)$ be a $3$-regular graph with $n$ vertices and $m$ edges. We construct a pricing instance with items set $\setIt$ and agent set $\setPl$ consisting of $4n$ items and $m+n$ agents, respectively. For each vertex $z\in V$, we add four vertex-items $z_1,z_2,z_3,z_4$ to $\setIt$ and one vertex-agent to $\setPl$ that, by abuse of notation, we also denote by $z$. The valuation of the agent is then defined as $v_z(z_i)=2$ for $i\in[4]$ and $0$ for any other item. Furthermore, for each edge $e =zw \in E$, we add an edge-agent $e$ to $\setPl$ with valuation $v_e(z_i)=v_e(w_i) = 1$ for $i\in[4]$ and $0$ for any other item; for an example, see Figure~\ref{fig:vertex}. 

We first consider the ex-post case. Assume that the ordering of the agents is such that edge-agents arrive first, followed by vertex-agents. We claim that for such an ordering, there exists an ex-post envy-free pricing scheme and allocation that results in a total revenue of $m+2n-k$ if and only if there exists a vertex cover of size $k$ in $G$. Since $m=3n/2$ and the minimum vertex cover has size at least $m/3=n/2$, a constant factor gap in the size of a vertex cover translates into a constant factor gap in the optimal profit for the pricing instance, which yields the desired APX-hardness result. 

To see the `if' direction, let $C\subseteq V$ be a vertex cover of $G$ of size $k$. For each vertex $z\in V$, let $p_t(z_i)\coloneqq 1$ if $z\in C$ and $p_t(z_i)\coloneqq 2$ otherwise for $i\in[4]$ and $t\in[\nPl]$ -- note that the prices do not change over time, implying that the final solution is ex-post envy-free. According to the ordering, edge-agents arrive first, and each of them takes one of the vertex-items that correspond to one of its endpoints that lies in $C$ for a price of $1$. Then vertex-agents arrive, and take a copy of the vertex-items corresponding to them for a price of $2$. Note that, since the graph is $3$-regular and four vertex-items were added for each vertex, each agent receives an item, and hence the total revenue is $m+2n-k$. 

To see the `only if' direction, consider dynamic prices $p_1,\dots,p_\nPl$ and an ex-post envy-free allocation $\mathbf{x}$ that maximizes the revenue for the ordering considered. It is not difficult to check that the pricing vectors can be assumed to take values $1$ and $2$ only. Let $e=zw$ be the first edge-agent, if exists, who does not get any item. That is, all the remaining vertex-items from $z_1,\dots,z_4,w_1,\dots,w_4$ are priced at $2$ upon the arrival of $e$. If we reduce the price of one of these items, say $z_i$, then we have to do the same modification for all the remaining vertex-items corresponding to $z$ and for all the remaining time steps to ensure ex-post envy-freeness from the point of view of vertex-agent $z$. This way, we lose a revenue of $1$ coming from vertex-agent $z$, but we gain this back by making a profit of $1$ from edge-agent $e$. By this observation, we may assume that for each vertex $z\in V$ and time step $t\in[\nPl+\nIt]$, either $p_t(z_i)=1$ for $i\in[4]$ or $p_t(z_i)=2$ for $i\in[4]$, and that vertices belonging to the former class form a vertex cover of $G$. This implies that the revenue is at most $m+2n-k$.

If the ordering of the agents is such that vertex-agents arrive first, followed by edge-agents, then a similar argument shows the hardness of the ex-ante case.
\end{proof}

\begin{figure}[t!]
\centering
  \includegraphics[width=.8\linewidth]{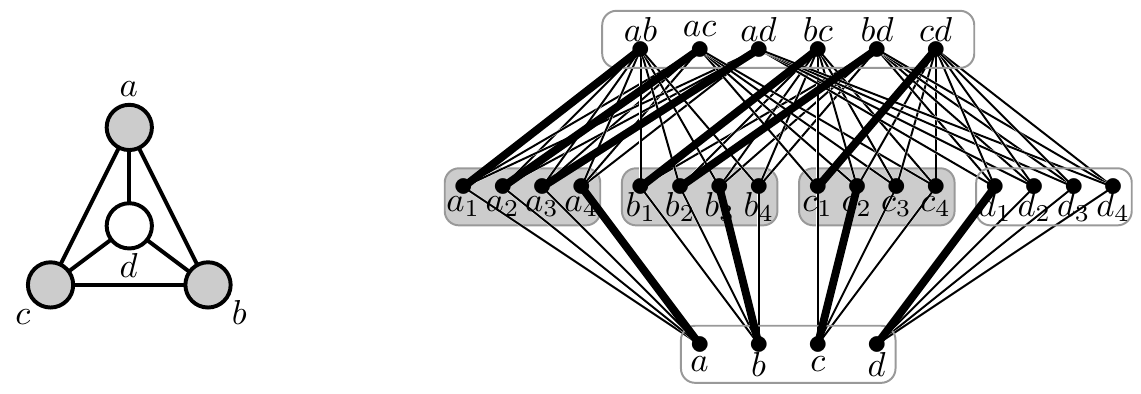}
  \caption{A $3$-regular graph $G$ with $n=4$ and $m=6$, where grey vertices form a minimum vertex-cover of size $k=3$. In the corresponding pricing instance, edges incident to vertex-agents and edge-agents have weights $2$ and $1$, respectively. When edge-agents arrive first, then setting the prices to $1$ on grey elements, $2$ otherwise, and allocating the items according to thick edges results in an ex-post envy-free solution with revenue $11=m+2n-k$.}
\label{fig:vertex}
\end{figure}

\subsection{Algorithms}
\label{sec:algorithms}

In the previous section, we showed that if the seller has no control over the order in which agents arrive, then even the seemingly more flexible framework of dynamic pricing is not enough for maximizing the revenue in the ex-post and ex-ante envy-free settings. On the positive side, if the ordering can be chosen by the seller, then an optimal pricing scheme can be determined efficiently. In what follows, we give polynomial-time algorithms for determining an ordering of the agents together with the price vectors so that the final allocation is ex-post or ex-ante envy-free and maximizes the revenue irrespective of how ties are broken by the agents. In both cases, we compare the solution to the maximum weight of a matching in the corresponding edge-weighted bipartite graph, which is clearly an upper bound for the revenue. 

Note that, in the agent-chooses case, an agent can decide not to take an item with utility $0$ for her. For keeping the description of the algorithms simple, we assume that in such cases the agent decides to take the item. In the general case then one an decrease the prices of the items in each step by a small $\varepsilon>0$, thus obtaining a revenue arbitrarily close to the maximum weight of a matching.

First we consider the ex-post case.

\begin{theorem}\label{thm:ex-post_revenue_p}
If the ordering of the agents can be chosen by the seller, then there exists a revenue-maximizing dynamic pricing scheme for the unit-demand ex-post envy-free pricing problem even if ties are broken by the agents. Furthermore, the optimal ordering and prices can be determined in polynomial time.
\end{theorem}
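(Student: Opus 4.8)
The plan is to construct a scheme whose revenue equals the maximum weight $w^{*}$ of a matching in the edge-weighted bipartite graph $G$ representing the market; since $w^{*}$ is an upper bound on the revenue of any mechanism, such a scheme is optimal. Following Section~\ref{sec:preparartion}, I would fix a maximum weight matching $M$ covering every item (possible by Lemma~\ref{lem:all}) and an optimal weighted covering $\pi$ as in Lemma~\ref{lemma:dual2}, so that $v_a(i)\le\pi(a)+\pi(i)$ for every edge $ai\in E$, with equality whenever $ai\in M$. Writing $a_i$ for the agent matched to item $i$ by $M$, the goal is to sell each item $i$ at price exactly $v_{a_i}(i)=\pi(a_i)+\pi(i)$; summed over all items this yields revenue $w(M)=w^{*}$, and every buyer then has utility $0$ for the item she receives.

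The seller would let the $\nIt$ matched agents arrive in nondecreasing order of $\pi$-value, with ties broken arbitrarily, followed by all unmatched agents in arbitrary order. Write $a^{(1)},\dots,a^{(\nIt)}$ for the matched agents in this order and $i^{(t)}$ for the item matched to $a^{(t)}$. Fix $\varepsilon>0$ and define $p_t(i^{(t')})=v_{a^{(t')}}(i^{(t')})+(t'-t)\varepsilon$ for $t\le t'$ and $p_t(i^{(t')})=v_{a^{(t')}}(i^{(t')})$ for $t>t'$; these prices are nonnegative and, for each fixed item, nonincreasing over time. The heart of the argument is an induction on $t$ showing that when $a^{(t)}$ arrives the available items are exactly $i^{(t)},\dots,i^{(\nIt)}$, and that among them $a^{(t)}$ strictly prefers her own item $i^{(t)}$ (of utility $0$): indeed, for $t'>t$,
\[
u_{a^{(t)},t}(i^{(t')})\le\bigl(\pi(a^{(t)})+\pi(i^{(t')})\bigr)-\bigl(\pi(a^{(t')})+\pi(i^{(t')})+(t'-t)\varepsilon\bigr)=\pi(a^{(t)})-\pi(a^{(t')})-(t'-t)\varepsilon<0,
\]
where $\pi(a^{(t')})\ge\pi(a^{(t)})$ because the arrival order is nondecreasing in $\pi$. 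By the stated tie-breaking convention, $a^{(t)}$ then buys $i^{(t)}$ at price $v_{a^{(t)}}(i^{(t)})$, which closes the induction; unmatched agents arrive after every item has been sold, so the total revenue is $w^{*}$.

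It remains to verify ex-post envy-freeness. An unmatched agent has $\setIt_t=\emptyset$ at and after her arrival, so the condition is vacuous. For a matched agent $a^{(t)}$ and any time $t''\ge t$, the items in $\setIt_{t''}$ are those $i^{(t''')}$ with $t'''\ge t''\ge t$, priced at $v_{a^{(t''')}}(i^{(t''')})+(t'''-t'')\varepsilon\ge\pi(a^{(t''')})+\pi(i^{(t''')})$, and the same estimate gives $u_{a^{(t)},t''}(i^{(t''')})\le\pi(a^{(t)})-\pi(a^{(t''')})\le 0=u_{a^{(t)},t}(i^{(t)})$; hence $i^{(t)}$ attains the maximum in the definition of ex-post envy-freeness for $a^{(t)}$. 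As $M$ and $\pi$ are computable in polynomial time (Lemma~\ref{lemma:dual2}) and the ordering and the prices are given by explicit formulas, the construction is polynomial.

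The step I expect to be the real obstacle is justifying the arrival order, everything else being bookkeeping with the covering inequality. If two matched agents whose items can be exchanged within some maximum weight matching were served in the wrong order, the earlier one could deviate to the cheap item destined for the later one, wrecking both the allocation and the revenue; ordering the matched agents by nondecreasing $\pi$-value is precisely what rules this out, because a cyclic chain of such precedence constraints would force a strictly decreasing cyclic chain of $\pi$-values, which is impossible. What then needs care is checking that this fixed order stays consistent with the dynamics --- that at step $t$ exactly the items $i^{(t)},\dots,i^{(\nIt)}$ are still on the market --- which is exactly what the induction above provides. One can also pick the next agent adaptively using the strongly connected components of the digraph $D_t$ of Section~\ref{sec:preparartion}, as suggested by Figure~\ref{fig:ex-post2}, but the static $\pi$-ordering already suffices.
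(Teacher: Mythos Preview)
Your argument is correct and shares the paper's blueprint --- fix $M$ and the dual $\pi$, order the matched agents by their $\pi$-values, and sell each her matched item at full value --- but the execution differs in two places. The paper lets the unmatched agents arrive first and then the matched agents in \emph{decreasing} $\pi$-order, pricing every item but the current one at $+\infty$; you let the matched agents arrive first in \emph{nondecreasing} $\pi$-order (unmatched last), with a finite $\varepsilon$-staircase on all remaining items. The $+\infty$ device is simpler bookkeeping; your staircase gives finite, strictly monotone prices throughout and makes the ``agent-chooses'' tie-breaking automatic via a strict preference.

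The more substantive difference is the direction of the ordering, and here your choice is the one that actually fits the ex-post constraint. Ex-post envy-freeness asks that $a^{(t)}$ not prefer any item available at a time $t''\ge t$, and your key estimate $u_{a^{(t)},t''}(i^{(t''')})\le\pi(a^{(t)})-\pi(a^{(t''')})$ needs $\pi(a^{(t''')})\ge\pi(a^{(t)})$, i.e.\ nondecreasing order. With the paper's decreasing order this inequality points the wrong way: take two matched agents with $v_{a_1}(i_1)=v_{a_1}(i_2)=2$, $v_{a_2}(i_2)=1$, $v_{a_2}(i_1)=0$; then $\pi(a_1)>\pi(a_2)$, the paper serves $a_1$ first, and at the next step $i_2$ is offered at price $1$, giving $a_1$ utility $1>0$ --- an ex-post violation. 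Indeed, the paper's envy check as written compares against agents who arrived \emph{before} (``let $a'$ be an agent who arrived before $a$'' with $\pi(a')\ge\pi(a)$), which is the ex-ante direction. So your nondecreasing ordering is not a stylistic alternative but the correction that makes the ex-post verification go through.
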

\begin{proof}
Consider the edge-weighted bipartite graph $G=(\setIt,\setPl;E)$ representing the market, where the weight of an edge $ai$ is $v_a(i)$ for $i\in\setIt$, $a\in\setPl$. By Lemma~\ref{lem:all}, we may assume that all the items are covered by every maximum weight matching of $G$. Let $M\subseteq E$ be an arbitrary maximum weight matching, and for each agent $a$, let $x_a$ denote the item to which $a$ is matched in $M$ if such an item exists, otherwise define $x_a$ to be the empty set. Furthermore, take a weighted covering $\pi$ provided by Lemma~\ref{lemma:dual2}. Note that $M$ consists of tight edges. 

We define the ordering $\sigma$ of the agents as follows: agents not covered by $M$ arrive first, and then the remaining agents arrive in a decreasing order according to their $\pi$ value, where ties are broken arbitrarily. 

Now we describe how to set the prices at each time step. Define the prices to be $+\infty$ until all the agents not covered by $M$ have left. Then, at each time step, consider the next agent $a$ and set the price of all remaining items to $+\infty$ except for $x_a$, for which we set the price to $v_a(x_a)$. Clearly, the agent will take the item $x_a$, hence the resulting allocation corresponds to $M$ and has total revenue equal to the maximum weight of a matching. 

It remains to verify that the pricing and the allocation thus obtained provide an ex-post envy-free solution. To see this, consider the arrival of an agent $a\in\setPl$. As all the remaining items has been priced at $+\infty$ so far except for $x_a$ which is priced at $v_a(x_a)$, it is enough to show that $a$ does not envy an item that was taken before her arrival. Those items were also priced at $+\infty$ except for the time step when they were taken by the corresponding agent. So let $a'$ be an agent who arrived before $a$ and took the item $x_{a'}$, that is, $x_{a'}\neq\emptyset$. Since $\pi$ is a weighted covering, $M$ consists of tight edges, and $\pi(a')\geq \pi(a)$, we get
\begin{align*}
u_{a,\sigma(a')}(x_{a'})
{}&{}=
v_{a}(x_{a'})-p_{\sigma(a')}(x_{a'})\\
{}&{}=
v_{a}(x_{a'})-v_{a'}(x_{a'})\\
{}&{}\leq
(\pi(a)+\pi(x_{a'}))-(\pi(a')+\pi(x_{a'}))\\
{}&{}=
\pi(a)-\pi(a')\\
{}&{}\leq 
0\\
{}&{}=
v_a(x_a)-v_a(x_a)\\
{}&{}=
u_{a,\sigma(a)}(x_a),
\end{align*}
which means that agent $a$ does not envy the item $x_{a'}$.
\end{proof}

A similar proof works for the ex-ante setting as well. However, the proof is is slightly more complicated as maintaining ex-ante envy-freeness requires a careful choice of prices. As a result, the revenue of the final allocation is not exactly the maximum weight of a matching in the associated bipartite graph, but can be arbitrarily close to that. For simplicity, we still refer to such a pricing as `optimal' in the statement of the theorem.

\begin{theorem}\label{thm:ex-ante_revenue_p}
If the ordering of the agents can be chosen by the seller, then there exists a revenue-maximizing dynamic pricing scheme for the unit-demand ex-ante envy-free pricing problem even if ties are broken by the agents. Furthermore, the optimal ordering and prices can be determined in polynomial time.    
\end{theorem}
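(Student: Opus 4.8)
The plan is to mirror the proof of Theorem~\ref{thm:ex-post_revenue_p}, but to reverse every direction so that prices never \emph{decrease}; this is exactly the monotonicity that forces ex-ante envy-freeness, just as non-decreasing prices do in Theorem~\ref{thm:ex-ante_welfare}. As before I would pass to the edge-weighted bipartite graph $G=(\setIt,\setPl;E)$ with $w(ai)=v_a(i)$, use Lemma~\ref{lem:all} to assume every maximum weight matching covers all items, fix a maximum weight matching $M$ and a covering $\pi$ from Lemma~\ref{lemma:dual2}, and write $x_a$ for the item matched to $a$. The target is the maximum weight of a matching, which by Theorem~\ref{thm:egervary} equals $\sum_a v_a(x_a)=\pi(V)$ and is the natural upper bound for the revenue.

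For the ordering I would invert the ex-post rule: the covered agents arrive in \emph{increasing} order of $\pi$-value (ties broken arbitrarily) and the uncovered agents arrive \emph{last}. Since $M$ covers every item, by the time the uncovered agents appear all items are sold, so each of them gets nothing with utility $0$ and is trivially envy-free (its time window $\{\sigma(a),\dots,\nPl\}$ contains only empty item sets). For the covered agents I want each $a$ to buy precisely $x_a$ at a price essentially equal to $v_a(x_a)$, so that the revenue approaches $\pi(V)$.

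The crux is the pricing in the agent-chooses model. Pricing every item statically at its buyer's value $p(x_a)=v_a(x_a)$ already makes prices constant, hence non-decreasing, and sets each covered agent's utility for her own item to $0$; moreover, any item $x_{a'}$ still available when $a$ arrives has $\sigma(a)\le\sigma(a')$, so $\pi(a)\le\pi(a')$, and tightness of $a'x_{a'}$ together with the covering inequality gives
\begin{align*}
u_{a,\sigma(a)}(x_{a'})
&=v_a(x_{a'})-v_{a'}(x_{a'})\\
&\le \bigl(\pi(a)+\pi(x_{a'})\bigr)-\bigl(\pi(a')+\pi(x_{a'})\bigr)\\
&=\pi(a)-\pi(a')\le 0.
\end{align*}
This is the exact mirror of the estimate in Theorem~\ref{thm:ex-post_revenue_p}: it shows $a$ weakly prefers $x_a$ over every currently available item and, since prices are constant, does not envy any \emph{later} sale either, which yields ex-ante envy-freeness. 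The obstacle is that the bound is tight when $\pi(a)=\pi(a')$ and $ax_{a'}$ is tight: then $a$ is indifferent between $x_a$ and a later buyer's item $x_{a'}$, and an agent-chooses tie-break may let her grab $x_{a'}$, stranding $a'$ and destroying a matched edge — a loss that need not be small.

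The main difficulty is therefore to break these ties in the buyer's favour without ever lowering a price over time. I would do this by selling $x_a$ at $v_a(x_a)-\varepsilon_a$ with tiny constants chosen strictly \emph{decreasing} in arrival order; each item's price is still constant in time, so prices do not decrease and the envy-freeness argument above is unchanged, while every covered agent now strictly prefers her own item (for a later $a'$ with $\pi(a)=\pi(a')$ one has $u_{a,\sigma(a)}(x_{a'})\le\varepsilon_{a'}<\varepsilon_a=u_{a,\sigma(a)}(x_a)$, and the case $\pi(a)<\pi(a')$ is immediate once $\varepsilon$ is small relative to the $\pi$-gaps). Hence every covered agent buys exactly $x_a$, the allocation is $M$, and the revenue is $\pi(V)-\sum_a\varepsilon_a$, which can be made arbitrarily close to $\pi(V)$ — the sense in which the pricing is called optimal just before the theorem; in the seller-chooses model the perturbation is unnecessary and $\pi(V)$ is attained exactly. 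I expect the only genuinely technical point to be checking that a single decreasing family $\{\varepsilon_a\}$ simultaneously enforces strict preference for \emph{all} agents, preserves non-negativity, and keeps every price non-decreasing; the rest is a direct reflection of the ex-post proof.
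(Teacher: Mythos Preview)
Your proposal is correct and follows the paper's overall plan: the same ordering (covered agents in increasing $\pi$-order first, uncovered agents last) and the same appeal to the principle, already used in Theorem~\ref{thm:ex-ante_welfare}, that non-decreasing prices force ex-ante envy-freeness, with revenue arbitrarily close to the maximum matching weight. The one real difference is the pricing rule: you fix each item's price \emph{statically} at $p(x_a)=v_a(x_a)-\varepsilon_a$ with a strictly decreasing perturbation sequence, whereas the paper uses genuinely time-varying prices $p_t(i)=\pi(a_t)+\pi(i)-\delta/2^t$ (plus $\varepsilon$ on the non-target items) that rise as $t$ grows. Your static scheme is the more elementary of the two --- constant prices are trivially non-decreasing, so the monotonicity verification disappears entirely --- while the paper's scheme trades this for a single global $\varepsilon$ at the cost of checking that the growing $\pi(a_t)$ term dominates the shrinking $\delta/2^t$ correction. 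Two small wording slips worth fixing: the uncovered agents' time window in the ex-ante model is $\{1,\dots,\sigma(a)\}$, not $\{\sigma(a),\dots,n\}$; and the clause ``does not envy any later sale, which yields ex-ante envy-freeness'' points in the wrong direction --- what actually carries the envy-freeness is the non-decrease of prices that you correctly invoke at the outset.
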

\begin{proof}
Consider the edge-weighted bipartite graph $G=(\setIt,\setPl;E)$, a maximum weight matching $M$, $x_a$ for $a\in\setPl$, and weighted covering $\pi$ as in the proof of Theorem~\ref{thm:ex-post_revenue_p}. Let $0<\delta<\min\bigl\{\min\{\pi(a)+\pi(i)-v_a(i)\mid ia\in E\ \text{is not tight}\},\ \min\{\pi(i)\mid i\in \setIt\}\bigr\}$. Note that, by Lemma~\ref{lemma:dual2} and the assumption that each item is covered by every maximum weight matching of $G$, such a $\delta$ exists. Furthermore, let $0<\varepsilon< \delta/2^\nPl$.

We define the ordering $\sigma$ of the agents as follows: agents covered by $M$ arrive first in an increasing order according to their $\pi$ values where ties are broken arbitrarily, followed by the remaining agents. 

Now we describe how to set the prices at each time step. If an agent $a$ arrives for which $x_a\neq\emptyset$, then for each item $i\in\setIt$ set its price to $\pi(a)+\pi(i)-\delta/2^{\sigma(a)}+\varepsilon$ except for $x_a$, for which we set the price to $\pi(a)+\pi(x_a)-\delta/2^{\sigma(a)}$. If $x_a=\emptyset$, then define the prices to be $+\infty$. By the definition of the ordering and the values $\delta$ and $\varepsilon$, the prices remain non-negative and do not decrease over time, hence the resulting allocation is automatically ex-ante envy-free. 

It suffices to show that each agent $a$ chooses $x_a$ upon arrival. Indeed, if this holds, then that results in a profit of  $\pi(a)+\pi(x_a)-\delta/2^{\sigma(a)}\geq v_a(x_a)-\delta$, where we used the fact that the edge $x_aa$ is tight by Lemma~\ref{lemma:dual2}(a). By choosing $\delta$ small enough, the total revenue of the final allocation can be arbitrarily close to the weight of $M$. Consider any remaining item $i$ distinct from $x_a$. As $\pi$ is a weighted covering and $x_aa$ is tight, we get 
\begin{align*}
u_{a,\sigma(a)}(i)
{}&{}=
v_{a}(i)-p_{\sigma(a)}(i)\\
{}&{}=
v_{a}(i)-(\pi(a)+\pi(i)-\delta/2^{\sigma(a)}+\varepsilon)\\
{}&{}< 
\delta/2^{\sigma(a)}\\
{}&{}=
v_a(x_a)-(\pi(a)+\pi(x_a)-\delta/2^{\sigma(a)})\\
{}&{}=
u_{a,\sigma(a)}(x_a).
\end{align*}
This means that $x_a$ is the unique maximizer of the utility of $a$ at time step $\sigma(a)$ and has positive utility for $a$, hence agent $a$ takes $x_a$ as stated.
\end{proof}

Finally, we settle the existence of weakly envy-free solutions when the ordering of the agents is fixed but known in advance. 

\begin{theorem}\label{thm:weak_revenue}
If the ordering of the agents is known in advance, then there exists a revenue-maximizing dynamic pricing scheme for the unit-demand weakly envy-free pricing problem even if ties are broken by the agents. Furthermore, the optimal prices can be determined in polynomial time.     
\end{theorem}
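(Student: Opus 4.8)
The plan is to reuse the $+\infty$-pricing trick from the proof of Theorem~\ref{thm:ex-post_revenue_p}, observing that the argument becomes considerably simpler here: weak envy-freeness constrains each agent only at the single time step of her arrival, so no relation between the price vectors of different time steps has to be maintained (unlike the monotonicity needed in Theorems~\ref{thm:ex-post_welfare}--\ref{thm:ex-ante_revenue_p}), and in particular we can recover the full maximum weight of a matching as revenue rather than only a value arbitrarily close to it. Concretely, consider the edge-weighted bipartite graph $G=(\setIt,\setPl;E)$ with $w(ai)=v_a(i)$; by Lemma~\ref{lem:all} we may assume every maximum weight matching covers all items. Fix an arbitrary maximum weight matching $M$, and for each agent $a$ let $x_a$ be the item matched to $a$ in $M$, or $\emptyset$ if $a$ is uncovered. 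Since the maximum weight of a matching is an upper bound on the achievable revenue, it suffices to design a weakly envy-free dynamic pricing scheme whose final allocation equals $M$.

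\textbf{Key steps.} The pricing rule is: at time step $t$, let $a=\sigma^{-1}(t)$ be the arriving agent — the seller knows $a$ because $\sigma$ is known in advance — and set $p_t(x_a)\coloneqq v_a(x_a)$ if $x_a\neq\emptyset$, while the price of every other remaining item is set to $+\infty$ (equivalently, to any value exceeding $\max_{b,i}v_b(i)$). First I would verify, by induction on $t$, that the set of items sold during the first $t$ steps is exactly $\{x_{\sigma^{-1}(1)},\dots,x_{\sigma^{-1}(t)}\}\setminus\{\emptyset\}$: the crucial point is that $M$ is a matching, so $x_a$ is distinct from $x_{a'}$ for every other agent $a'$ and hence has not been offered at a finite price before step $\sigma(a)$, i.e. $x_a\in\setIt_{\sigma(a)}$; then at step $\sigma(a)$ agent $a$ faces a price vector in which $x_a$ is the unique utility maximizer, with utility $0$, all other remaining items having utility $-\infty$, so she takes $x_a$ (using the convention stated before the theorems that an agent whose best available utility is $0$ takes the corresponding item; alternatively, replace $v_a(x_a)$ by $v_a(x_a)-\varepsilon$ to get revenue within $n\varepsilon$ of the optimum without invoking the convention). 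If $x_a=\emptyset$, every remaining item has negative utility and $a$ takes nothing. After the last step all items covered by $M$ have been sold, so by Lemma~\ref{lem:all} the final allocation is $M$, with revenue $\sum_{a}v_a(x_a)=w(M)$, matching the upper bound; and since at every step the arriving agent has a unique best item, the outcome is independent of any tie-breaking rule, so the scheme works even in the agent-chooses setting. Finally, weak envy-freeness holds by construction, as $x_a\in\argmax\{u_{a,\sigma(a)}(i)\mid i\in\setIt_{\sigma(a)}\}$ for every agent $a$, which is precisely the defining condition for $T_a=\{\sigma(a)\}$. Computing $M$ and the price updates takes polynomial time, yielding the claimed efficiency.

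\textbf{Main obstacle.} There is no real difficulty; the only point requiring attention is the claim that $x_a$ is never sold before agent $a$ arrives. This is exactly where the hypothesis that the ordering is known in advance is essential: the seller must choose, at each step, which single item to price below $+\infty$, and picking $x_a$ is only possible if the identity of the next agent is known. Without this information the seller cannot guarantee that the cheap item ends up with the right agent, which is consistent with the unspecified-order variant being left open in Table~\ref{table:results}.
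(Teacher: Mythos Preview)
Your proposal is correct and follows essentially the same approach as the paper: fix a maximum weight matching $M$, and at each step price the designated item $x_a$ at $v_a(x_a)$ while pricing all other remaining items at $+\infty$. Your write-up is in fact more detailed than the paper's (which dispatches the argument in three sentences), but the key idea and the pricing rule coincide.
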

\begin{proof}
Let $\sigma$ denote the fixed ordering of the agents. Define an edge-weighted bipartite graph $G=(\setIt,\setPl;E)$, maximum weight matching $M\subseteq E$, and $x_a$ for $a\in\setPl$ as in the proof of Theorem~\ref{thm:ex-post_revenue_p}. At the arrival of agent $a$, set the price of all remaining items to $+\infty$ except for $x_a$, for which we set the price to $v_a(x_a)$. The agent clearly takes $x_a$ at the maximum possible price, hence the resulting allocation and pricing are optimal.
\end{proof}

\section{Conclusions}
\label{sec:conclusions}

In this paper, we studied the existence of optimal dynamic prices under fairness constraints in unit-demand markets. We proposed four possible notions of envy-freeness depending on the time period over which agents compare themselves to others, and settled the existence of optimal dynamic prices in various settings. 

We close the paper with mentioning a few open problems. While we concentrated on social welfare and revenue maximization problems, a natural question is to consider alternative objective functions such as the average or the max-min social welfare and revenue. Besides being interesting on their own, such functions may be used to overcome the difficulties mentioned in Remark~\ref{rem:strong_welfare}. 

A recently line of research investigated the problem of balancing fairness and efficiency in markets, see e.g.~\cite{garg2021approximating}. It would be interesting to see how dynamic envy-free pricing behaves under such objective functions. 


Finally, the complexity of weak envy-free revenue maximization with unspecified order remains open. This variant is of special interest, since it naturally connects revenue maximization with the recent popular strain of research on dynamic pricing schemes.

\paragraph{Acknowledgement.}

The work was supported by DAAD with funds of the Bundesministerium f{\"u}r Bildung und Forschung (BMBF), the Lend\"ulet Programme of the Hungarian Academy of Sciences -- grant number LP2021-1/2021 and by the Hungarian National Research, Development and Innovation Office -- NKFIH, grant number FK128673.

\bibliographystyle{abbrv}
\bibliography{pricing.bib}

\end{document}